\documentclass{mytemplate}

\title{Efficient and Debiased Learning of Average Hazard Under Non-Proportional Hazards}

\author{
Xiang Meng\thanks{Corresponding author: \texttt{xmeng@g.harvard.edu}. Postdoctoral Fellow at Dana-Farber Cancer Institute.}
\and
Lu Tian \thanks{Professor, Department of Biomedical Data Science, Stanford University}
\and
Kenneth Kehl\thanks{Assistant Professor of Medicine, Harvard Medical School and Dana-Farber Cancer Institute}
\and
Hajime Uno\thanks{Associate Professor,  Harvard Medical School and Department of Data Science, Dana-Farber Cancer Institute}
}

\begin{document}
\maketitle

\begin{abstract}
The hazard ratio from the Cox proportional hazards model is a ubiquitous summary of treatment effect. However, when hazards are non-proportional, the hazard ratio can lose a stable causal interpretation and become study-dependent because it effectively averages time-varying effects with weights determined by follow-up and censoring. We consider the average hazard (AH) as an alternative causal estimand: a population-level person-time event rate that remains well-defined and interpretable without assuming proportional hazards. Although AH can be estimated nonparametrically and regression-style adjustments have been proposed, existing approaches do not provide a general framework for flexible, high-dimensional nuisance estimation with valid $\sqrt{n}$ inference. We address this gap by developing a semiparametric, doubly robust framework for covariate-adjusted AH. We establish pathwise differentiability of AH in the nonparametric model, derive its efficient influence function, and construct cross-fitted, debiased estimators that leverage machine learning for nuisance estimation while retaining asymptotically normal, $\sqrt{n}$-consistent inference under mild product-rate conditions. Simulations demonstrate that the proposed estimator achieves small bias and near-nominal confidence-interval coverage across proportional and non-proportional hazards settings, including crossing-hazards regimes where Cox-based summaries can be unstable. We illustrate practical utility in comparative effectiveness research by comparing immunotherapy regimens for advanced melanoma using SEER-Medicare linked data.
\end{abstract}

\section{Introduction}

Time-to-event endpoints are the cornerstone of clinical trials and observational comparative effectiveness research. The Cox proportional hazards model and its hazard ratio (HR) remain ubiquitous summaries of treatment effect \citep{cox1972regression}. However, the proportional hazards assumption is frequently violated in modern applications, with departures such as delayed treatment effects that are common in immuno-oncology \citep{Alexander2018-iq} or crossing hazards routinely diagnosed in practice \citep{grambsch1994proportional}. Even in randomized trials, the Cox HR compares instantaneous hazards within time-specific risk sets among individuals who have survived event-free up to each time point. Because risk sets are post-treatment--selected, the HR is a non-collapsible contrast and can be difficult to interpret causally and clinically when hazards vary over time \citep{hernan2010hazards,aalen2015cox}.

When hazards are non-proportional, these interpretational difficulties are amplified. The Cox partial likelihood estimator no longer targets a single, time-invariant parameter; instead, it converges to a weighted average of the time-varying log hazard ratio. Crucially, the weights depend on the study-specific censoring and follow-up distributions \citep{struthers1986misspecified,xu2000estimating,schemper2009averageHR,Horiguchi2019-xc}. Consequently, two studies with identical biological treatment effects can yield materially different HR estimates simply because their accrual rates or censoring patterns differ. Moreover, as a relative measure lacking an absolute reference value, the HR can be difficult to translate into clinically actionable information for patients and decision-makers \cite{uno2014moving}.

These concerns have motivated greater emphasis on estimands defined directly through the marginal survival curve---for example, survival differences at fixed horizons and contrasts of restricted mean survival time (RMST) \citep{royston2013rmst,uno2014moving}. These quantities provide absolute reference values and remain well-defined without assuming proportional hazards. They are often directly interpretable for patients and clinicians, while complementing rate-based summaries on the hazard scale.

To provide an interpretable hazard-scale summary that does not rely on proportional hazards, \citet{uno2023ratio} formalized the average hazard (hereafter AH). For a treatment arm $a$ and horizon $\tau$, the AH is defined as
\begin{equation*} \eta_a(\tau) = \frac{\Pr(T_a \le \tau)}{\mathbb{E}[T_a \wedge \tau]} = \frac{1-S_a(\tau)}{\int_0^\tau S_a(t)\,dt}.
\end{equation*} 
where $h_a(t)$ and $S_a(t)$ are hazard function and survival function, respectively. This representation shows the AH as a population-level person-time event rate: time periods contribute to the average in proportion to the probability of the population remaining at risk. For example, an AH of 0.05 indicates that, on average, one would expect 5 events per 100 person-years of follow-up within the window $[0, \tau]$. Contrasts such as the log AH ratio therefore provide a robust, rate-based summary statistic \citep{uno2023ratio}. 

The original development of the AH focused on unadjusted two-sample comparisons. However, in many modern applications, baseline covariate adjustment is indispensable. It allows for precision gains in randomized trials and is a prerequisite for identifying causal effects in observational studies where treatment assignment is confounded. Recent work has proposed stratified and regression-based approaches for AH \citep{Qian2025-bd,uno2024regressionAH} and has demonstrated AH-based analyses in observational comparative effectiveness research \citep{Xiong2025-bi}. However, there remains no general framework for covariate-adjusted AH inference that simultaneously (i) supports causal identification in observational settings by explicitly accounting for confounding and censoring, (ii) accommodates flexible, data-adaptive estimation of nuisance components (treatment, censoring, and outcome mechanisms), and (iii) guarantees valid $\sqrt{n}$ inference when these nuisance functions are estimated with machine learning.

We address this gap by developing a semiparametric, doubly robust framework for covariate-adjusted AH. Our contributions are three-fold: 
(i) we formulate the AH as a causal estimand using potential outcomes and identify it under standard assumptions; 
(ii) we derive the efficient influence function for $\log \eta_a(\tau)$ by combining the efficient influence function for the survival curve with a functional delta method; and (iii) we construct cross-fitted debiased estimator with doubly robust remainder structure that leverage machine learning for nuisance estimation while maintaining asymptotically normal, $\sqrt{n}$-consistent inference under mild rate conditions \citep{chernozhukov2018double,westling2024inference,cai2020one}. We evaluate the finite-sample performance of our estimator via simulation and demonstrate its practical utility by analyzing clinical outcomes using SEER-Medicare data, a large-scale linkage of population-based cancer registries and longitudinal administrative claims commonly used for real-world comparative effectiveness research.

Section~\ref{sec:causal_quantity} defines the causal AH estimand and its identification. Sectio~\ref{sec:debiased-estimation} derives the efficient influence function and details the construction of the debiased estimator. Section~\ref{sec:large-sample-theory} provides the asymptotic theory. Sections~\ref{sec:simulation} and \ref{sec:data-analysis} present simulation studies and the data analysis, respectively.

\section{Causal Quantity: Average Hazard}
\label{sec:causal_quantity}

In this section, we define AH in terms of potential outcomes, describe the identifying assumptions, and provide an identification theorem linking the causal parameter to the observed data distribution. 

\subsection{Causal definition}

Let $T$ denote the event time of interest, $A \in \{0,1\}$ a binary treatment indicator, and $W \in \mathcal{W} \subset \mathbb{R}^d$ a vector of baseline covariates. To define treatment effects, we adopt the potential outcomes framework \citep{rubin1974estimating, pearl2009causality}, where $T(a)$ denotes the counterfactual (or potential) event time that would have been observed for an individual had they received treatment $a$. 

For a fixed truncation time $\tau > 0$, we define the counterfactual survival function as the probability that an individual's potential event time under treatment $a$ exceeds $t$:
\[
S_0(t \mid a) = \Pr\{T(a) > t\}, \qquad t \in [0,\tau].
\]
This function represents the survival experience of the entire population if everyone were assigned to treatment $a$, effectively removing the influence of confounding and selection bias. From this, we define two key summary measures:
\begin{equation}
\label{eq:F-tau-R-tau}
F_a(\tau) = \Pr\{T(a)\le \tau\}=1-S_0(\tau \mid a),
\qquad
R_a(\tau)
= \mathbb{E}\!\left[T(a)\wedge\tau\right]
= \int_0^\tau S_0(t \mid a)\,dt,
\end{equation}
where $F_a(\tau)$ is the counterfactual cumulative incidence (the probability of the event occurring by time $\tau$) and $R_a(\tau)$ is the counterfactual restricted mean survival time (RMST), representing the expected time lived free of the event up to horizon $\tau$. The identity $\mathbb{E}[T(a)\wedge\tau]=\int_0^\tau \Pr\{T(a)>t\}\,dt$ follows from the tail-integral representation
$T(a)\wedge\tau=\int_0^\tau \mathbb{I}\{T(a)>t\}\,dt$ and Tonelli's theorem (see, e.g., \cite{royston2013rmst, uno2014moving}).

\begin{definition}[Average hazard]
The average hazard under treatment $a$ and truncation time $\tau$ is defined as the ratio:
\begin{equation}
\label{eq:AH}
    \eta_a(\tau) = \frac{F_a(\tau)}{R_a(\tau)}.
\end{equation}
\end{definition}

The interpretation of $\eta_a(\tau)$ is that it equals the expected number of events per person per unit time in treatment arm $a$, averaged over the entire study population up to horizon $\tau$. For illustration, consider an oncology trial with $\tau=1$ year (unit time is a year). Suppose in the treatment arm the probability of death within one year is $F_a(1)=0.20$ and the RMST is $R_a(1)=0.8$ years. Then $\eta_a(1) = 0.20/0.8 = 0.25$, meaning that under treatment $a$, we expect 0.25 deaths per person-year, or equivalently 25 deaths per 100 person-years.

We take $\eta_a(\tau)$ as our causal estimand. Throughout this paper, we focus on the log average hazard ratio between arms:
\begin{equation}
\label{eq:log-RAH}
    \theta(\tau) = \log\!\left(\frac{\eta_1(\tau)}{\eta_0(\tau)}\right).
\end{equation}
While other contrasts are possible, we focus on the log-ratio scale as our primary analytical target. Operationally, we derive the inference procedure for $\theta(\tau)$, though results can be readily exponentiated to report the Ratio of Average Hazards (RAH) and its corresponding confidence intervals which provides a familiar scale for clinical interpretation \citep{uno2023ratio}.

Several properties distinguish AH from traditional Cox's hazard ratio (HR) and motivate its use as a causal estimand.

\begin{remark}[Interpretability under non-proportional hazards and study-design dependence of Cox HR]
Unlike the Cox HR 
AH remains well-defined as a functional of the marginal counterfactual survival functions and does not require a proportional hazards assumption. In particular, under non-proportional hazards the Cox HR estimand can depend on follow-up and censoring patterns, so two studies with identical event-time distributions but different administrative censoring can yield very different HR estimates \citep{struthers1986misspecified,schemper2009averageHR}. By contrast, AH depends on the counterfactual event-time distribution on $[0,\tau]$ only; under mild identification assumptions, it provides a sensible summary of the difference in the population-level person-time event rate between two groups \citep{uno2023ratio}.
\end{remark}

\begin{remark}[Consistency with stochastic ordering]
Finally, AH respects stochastic ordering. If $S_1(t) \ge S_0(t)$ for all $t \le \tau$ (i.e., survival under treatment 1 stochastically dominates survival under treatment 0), then $\eta_1(\tau) \le \eta_0(\tau)$. This property ensures that AH does not yield paradoxical comparisons when survival curves are ordered.
\end{remark}

\subsection{Identification of the Causal Parameter}
\label{subsec:identification}

We now link the causal parameter to the observed data distribution. The assumptions are similar to those made in \cite{westling2024inference}. It is useful to separate
(i) causal assumptions that identify the estimand as a functional of the full-data law of $(W,A,T)$
from (ii) additional assumptions that link this functional to the actually observed right-censored data
$(W,A,U,\Delta)$, where $U=\min(T,C)$, $\Delta=\mathbb{I}\{T\le C\}$ and $C$ is the censoring time. 

\begin{enumerate}
  \item[(A1)] \textbf{Consistency:} $T = T(a)$ if $A=a$.
  \item[(A2)] \textbf{Conditional exchangeability:} $(T(0),T(1)) \perp A \mid W$.
  \item[(A3)] \textbf{Positivity:} $0 < \pi_a(W) = \Pr(A=a\mid W) < 1$ almost surely.
  \item[(A4)] \textbf{Conditional independent censoring:} $C \perp T(a) \mid (A,W)$.
  \item[(A5)] \textbf{Censoring support:} $\Pr(C \ge \tau \mid A=a,W) > 0$ almost surely.
\end{enumerate}

For each arm $a$, define the conditional survival in the full-data world by
\[
S_0(t \mid a,w) \;:=\; \Pr(T>t \mid A=a, W=w),
\]
and the marginal covariate-adjusted survival curve by
\[
S_a(t) \;:=\; \mathbb{E}\!\left[ S_0(t \mid a, W) \right],\qquad t\in[0,\tau],
\]
where the expecation is with respect to the marginal distribution of $W.$
Under (A1), (A2), and (A3), the marginal curve $S_a(\cdot)$ coincides with the counterfactual survival
$t\mapsto \Pr\{T(a)>t\}$, and therefore identifies $\eta_a(\tau)$ via \eqref{eq:F-tau-R-tau}--\eqref{eq:AH}.

\begin{theorem}[Causal identification of AH]
\label{thm:identification}
Fix $a\in\{0,1\}$ and $\tau>0$. Under (A1)--(A3), the causal average hazard
$\eta_a(\tau)=F_a(\tau)/R_a(\tau)$ defined in \eqref{eq:AH} is identified by
\begin{equation}
\label{eq:single-arm-ah}
\eta_a(\tau)
=
\frac{1 - S_a(\tau)}{\int_0^\tau S_a(t)\,dt},
\qquad
S_a(t)=\mathbb{E}\big[S_0(t\mid a,W)\big],
\end{equation}
where $S_0(t\mid a,w)=\Pr(T>t\mid A=a,W=w)$.
\end{theorem}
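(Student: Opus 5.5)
The plan is the standard g-formula argument: show pointwise in $t\in[0,\tau]$ that $\Pr\{T(a)>t\}=S_a(t)$, and then carry this identity through the functionals $F_a(\tau)$ and $R_a(\tau)$ in \eqref{eq:F-tau-R-tau}.

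\textbf{Step 1 (pointwise identification).} Fix $t\in[0,\tau]$. By iterated expectations over the marginal law of $W$,
\[
\Pr\{T(a)>t\}=\mathbb{E}\bigl[\Pr\{T(a)>t\mid W\}\bigr].
\]
By conditional exchangeability (A2), the marginal independence $T(a)\perp A\mid W$ holds for each coordinate. Hence $\Pr\{T(a)>t\mid W\}=\Pr\{T(a)>t\mid A=a,W\}$. This conditioning is well defined almost surely because positivity (A3) gives $\Pr(A=a\mid W)>0$. By consistency (A1), on the event $\{A=a\}$ we have $T(a)=T$, so $\Pr\{T(a)>t\mid A=a,W\}=\Pr(T>t\mid A=a,W)=S_0(t\mid a,W)$. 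Taking expectations gives $\Pr\{T(a)>t\}=\mathbb{E}[S_0(t\mid a,W)]=S_a(t)$ for every $t\in[0,\tau]$.

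\textbf{Step 2 (numerator and denominator).} Evaluating Step 1 at $t=\tau$ gives $F_a(\tau)=1-S_a(\tau)$. The RMST identity stated after \eqref{eq:F-tau-R-tau} (tail integral plus Tonelli) gives $R_a(\tau)=\int_0^\tau \Pr\{T(a)>t\}\,dt$. Substituting Step 1 inside the integral, which is legitimate because the integrands agree for every $t$, yields $R_a(\tau)=\int_0^\tau S_a(t)\,dt$. This function is measurable in $t$ since $t\mapsto S_0(t\mid a,w)$ is monotone.

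\textbf{Step 3 (ratio).} Taking the ratio gives \eqref{eq:single-arm-ah}. This requires the denominator to be nonzero. Since $T(a)>0$, right-continuity gives $S_a(t)>0$ near $0$, so $R_a(\tau)>0$ and the ratio is well defined.

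The only delicate point is measure-theoretic: making sure the conditional probability given $\{A=a,W\}$ is well defined, which is exactly where (A3) enters. Note that (A4)--(A5) are not needed here; they enter later, when $S_0(t\mid a,w)$ is linked to the observed censored data $(W,A,U,\Delta)$.
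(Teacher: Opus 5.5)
Your proposal is correct and follows essentially the same route as the paper's proof: pointwise g-formula identification of $\Pr\{T(a)>t\}$ via (A1)--(A3), then the numerator at $t=\tau$, the denominator via the tail-integral identity, and finally the ratio. Your additional check that $R_a(\tau)>0$ is something the paper omits; note that it tacitly relies on the convention $\Pr\{T(a)>0\}>0$ (e.g., $T(a)>0$ almost surely), which is not among (A1)--(A3).
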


\begin{proposition}[Observed-data identification under right censoring]
\label{prop:identification-censoring}
Suppose we observe $O=(W,A,U,\Delta)$ with $U=\min(T,C)$ and $\Delta=\mathbb{I}\{T\le C\}$.
Under (A4)--(A5), for each $(a,w)$ the conditional survival $S_0(t\mid a,w)$ is identified from the observed-data law.
In particular, the conditional cumulative hazard admits the observable representation
\[
\Lambda_0(t\mid a,w)
=
\int_0^t
\frac{\Pr(U\in [u,u+du],\Delta=1\mid A=a,W=w)}{\Pr(U\ge u\mid A=a,W=w)},
\]
so $S_0(t\mid a,w)=\pi_{(0,t]}\{1-d\Lambda_0(u\mid a,w)\}$ (or $S_0(t\mid a,w)=\exp\{-\Lambda_0(t\mid a,w)\}$ under continuity).
\end{proposition}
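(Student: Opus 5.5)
The plan is to work conditionally on $(A,W)=(a,w)$ and reduce the claim to the classical identification of a hazard from independently right-censored data. Write $\Lambda_0(t\mid a,w)=\int_0^t dF_T(u\mid a,w)/\Pr(T\ge u\mid a,w)$ for the cumulative hazard of $T$ given $(A,W)=(a,w)$. This is defined through the conditional law of $T$ in the full-data world. By the product-integral (Kaplan--Meier type) representation of a survival function in terms of its cumulative hazard, $S_0(t\mid a,w)=\prod_{(0,t]}\{1-d\Lambda_0(u\mid a,w)\}$ for any distribution. It reduces to $\exp\{-\Lambda_0\}$ when $T$ has a continuous conditional law. So it suffices to show that $\Lambda_0(t\mid a,w)$ equals the displayed ratio of observable quantities for every $t\le\tau$.

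The key step is a factorization, for each $u$, of both the numerator and the denominator. On the event $A=a$, consistency (A1) gives $T=T(a)$. Assumption (A4) then yields $T\perp C\mid (A=a,W=w)$.

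For the numerator, $\Pr(U\in[u,u+du],\Delta=1\mid a,w)=\Pr(T\in[u,u+du],C\ge T\mid a,w)$. By conditional independence this equals $\Pr(C\ge u\mid a,w)\,dF_T(u\mid a,w)$. Formally, integrate $\Pr(C\ge s\mid a,w)$ against $dF_T(s\mid a,w)$ over Borel sets, so that ties $C=T$ count as events, consistent with $\Delta=\mathbb{I}\{T\le C\}$.

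For the denominator, $\Pr(U\ge u\mid a,w)=\Pr(T\ge u,C\ge u\mid a,w)=\Pr(T\ge u\mid a,w)\Pr(C\ge u\mid a,w)$.

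Taking the ratio, the censoring factor $G(u\mid a,w):=\Pr(C\ge u\mid a,w)$ cancels. This leaves $dF_T(u\mid a,w)/\Pr(T\ge u\mid a,w)=d\Lambda_0(u\mid a,w)$. Integrating over $(0,t]$ gives the stated observable representation.

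The main obstacle is justifying the cancellation: $G(u\mid a,w)$ must be strictly positive for all $u\le t\le\tau$. This is exactly where (A5) enters. Since $G(\cdot\mid a,w)$ is nonincreasing, $G(u\mid a,w)\ge \Pr(C\ge\tau\mid a,w)>0$ for all $u\le\tau$, almost surely in $w$. We also need the ratio to be well defined where $\Pr(T\ge u\mid a,w)=0$. There, both sides of the hazard increment vanish, and we adopt the usual convention $0/0=0$, which leaves $\Lambda_0$ and hence $S_0$ unaffected.

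A secondary point is that conditioning on $A=a$ must have positive probability, i.e. $\pi_a(w)>0$. This is supplied by (A3), which is implicitly in force via Theorem~\ref{thm:identification}. It ensures that the conditional laws given $(A,W)=(a,w)$ are determined by the observed-data law for $P_W$-almost every $w$.

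Finally, since the right-hand side depends only on the distribution of $(W,A,U,\Delta)$, so do $\Lambda_0(\cdot\mid a,w)$ and $S_0(\cdot\mid a,w)$ on $[0,\tau]$. Combined with Theorem~\ref{thm:identification}, this identifies $S_a(t)$ and $\eta_a(\tau)$ from the observed data.
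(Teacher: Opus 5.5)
Your proposal is correct and takes the same route as the paper. Step 5 of the appendix proof simply invokes the standard conditional-hazard/product-integral identification under (A4)--(A5), citing (A3) and (A5) for well-definedness. Your factorization of the event sub-distribution and of $\Pr(U\ge u\mid a,w)$, followed by cancellation of $G(u\mid a,w)\ge \Pr(C\ge\tau\mid a,w)>0$ on $[0,\tau]$, is exactly the argument that appeal leaves implicit; using (A1) to pass from $T(a)$ in (A4) to the observed $T$ is a helpful addition.
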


Proofs are given in Appendix~\ref{app:proof-identification}.

\begin{remark}[Comparison to an alternative marginalization approach] 
AH coincides with the true population-level person-time incidence rate. A seemingly natural alternative would be to average stratum-specific incidence rates directly. We call this the marginalized conditional incidence rate, defined as
\begin{equation}
\label{eq:marg-cond-rate}
\tilde{\eta}_a(\tau) = \mathbb{E}\!\left[\frac{F(\tau \mid a,W)}{R(\tau \mid a,W)}\right],
\end{equation}
where $F(\tau \mid a,w)$ and $R(\tau \mid a,w)$ are stratum-specific cumulative incidence and RMST.
Thus, $\tilde{\eta}_a(\tau)$ is the simple average of these stratum-specific rates across $W$, giving each stratum weight one in \eqref{eq:marg-cond-rate}.

By contrast, AH (Equation~\ref{eq:AH}) can be expressed as a weighted average of stratum-specific rates,
\[
\eta_a(\tau) = \mathbb{E}\!\left[ \frac{R(\tau \mid a,W)}{\mathbb{E}[R(\tau \mid a,W)]} \cdot \frac{F(\tau \mid a,W)}{R(\tau \mid a,W)} \right],
\]
so each stratum is weighted in proportion to its contribution to total person-time. Because the expectation of a ratio is not equal to the ratio of expectations, $\tilde{\eta}_a(\tau) \neq \eta_a(\tau)$ in general---a manifestation of non-collapsibility.

Interpretively, AH answers the question: “How many events per unit person-time occur in arm $a$ at the population level?” The marginalized conditional incidence rate answers: “What is the average of each subject’s expected event rate?” These are not the same. Only AH corresponds to the actual population-level incidence rate, whereas the marginalized conditional incidence rate can dilute treatment contrasts by over-representing high-risk subgroups that contribute little person-time.

\end{remark}

In the Section~\ref{sec:debiased-estimation}, we build upon Theorem~\ref{thm:identification} to develop doubly robust and debiased estimators of $\eta_a(\tau)$, adapting the semiparametric framework of \citet{westling2024inference}.

\subsection{Related causal estimands for time-to-event outcomes.}
A growing literature has proposed alternative causal estimands for survival outcomes that avoid the interpretational difficulties of hazard-based contrasts. One prominent direction is the win ratio, net benefit, and related pairwise comparison estimands, which compare each treated subject to each control subject and tally wins, losses, and ties according to prespecified clinical priorities. These estimands can be attractive when multiple outcomes or event priorities are relevant, but they are intrinsically pairwise rather than arm-specific functionals of a marginal survival curve and therefore depend on design choices such as prioritization rules and tie handling. Right censoring further complicates interpretation: na\"ive win-ratio targets can become entangled with censoring unless the estimand is explicitly defined to separate the causal effect from censoring. Recent work has emphasized this estimand–estimator distinction and proposed censoring-robust causal definitions and inference for win-ratio-type targets \citep{mao2024defining, martinussen2025debiased}.

A second direction focuses on assumption-lean estimation of familiar regression targets in survival analysis. For example, \citet{dukes2024assumptionleancox} develop debiased inference for Cox regression parameters that remain well-defined under model misspecification and are compatible with flexible nuisance learning. While useful when a Cox-type coefficient is desired as a descriptive summary, such targets remain projections tied to risk-set comparisons and inherit non-collapsibility; under non-proportional hazards, their implicit weighting can still depend on follow-up and censoring patterns. In contrast, the average hazard with survival weight is defined directly as a smooth functional of the marginal counterfactual survival curve, yielding a population-level person-time event rate that remains interpretable under non-proportional hazards and is invariant to independent censoring once identified.

Finally, several causal survival estimands target absolute risks or survival probabilities at fixed horizons. In particular, \citet{rytgaard2021onestep} consider intervention-specific risks evaluated at time $\tau$, which are g-computation functionals with clear causal interpretation under standard assumptions. These probability-scale estimands are natural when the scientific question is explicitly time-point specific. Our focus is complementary but distinct: we study a rate-based marginal estimand that combines absolute risk and restricted mean survival time into an interpretable person-time event rate, and we develop efficient, doubly robust inference specifically for the average hazard and its contrasts.

\section{Debiased estimation}
\label{sec:debiased-estimation}
In this section, we derive efficient influence functions (EIFs) for the key building blocks of the average hazard (AH) estimand and construct a debiased, doubly robust estimator. 

Recall from Section~\ref{subsec:identification} (Theorem~\ref{thm:identification}) that under (A1)--(A5),
$S_a(t) = \mathbb{E}\!\left[S_0(t\mid a,W)\right]$ coincides with the counterfactual survival function $t\mapsto \Pr\{T(a)>t\}$. Moreover, by
\eqref{eq:F-tau-R-tau}--\eqref{eq:AH},
$F_a(\tau)=1-S_a(\tau)$, $R_a(\tau)=\int_0^\tau S_a(t)\,dt$, and $\eta_a(\tau)=F_a(\tau)/R_a(\tau)$.
Thus the arm-specific log average hazard $\theta_a(\tau)=\log\{\eta_a(\tau)\}$ is a smooth functional of the path
$t\mapsto S_a(t)$. We therefore first characterize the efficient influence function (EIF) for $S_a(t)$ and then map it
to the EIF for $\theta_a(\tau)$ via the functional delta method. 

\subsection{Preliminaries: data structure and EIF for marginal survival}
\label{sec:eif-ah}

We observe i.i.d.\ data units $O=(W,A,U,\Delta)$, where $W\in\mathcal{W}$ are baseline covariates,
$A\in\{0,1\}$ is the treatment indicator, $U = \min\{T,C\}$, and $\Delta = \mathbb{I}\{T \le C\}$, with $C$ the censoring time.
Let $P_0$ denote the true distribution of $O$. 

For each treatment arm $a\in\{0,1\}$ we define the nuisance functions 
$\pi_0(a\mid w)=P_0(A=a\mid W =w),$ $
S_0(t\mid a,w)=P_0(T>t\mid A=a, W=w),$ $
G_0(t\mid a,w)=P_0(C\ge t\mid A=a,W=w),$ 
and write $\Lambda_0(\cdot\mid a,w)$ for the conditional cumulative hazard of $T$ given $(a,w)$, with associated
Riemann--Stieltjes measure $d\Lambda_0(\cdot\mid a,W)$.

Before deriving the EIF for the average hazard, we first characterize the EIF for the treatment-specific marginal survival curve $S_a(t) = \mathbb{E}[S_0(t \mid a, W)]$, which serves as the primary building block for our estimand.

\begin{proposition}[Efficient influence function of $S_a(t)$, {\citet{westling2024inference}}]
\label{prop:eif-mu}
Under assumptions (A1)--(A5), the marginal survival curve $S_a(t)=\mathbb{E}[S_0(t\mid a,W)]$ is pathwise differentiable
in the nonparametric model with efficient influence function
\begin{align*}
\phi^*_{S_a(t)}(O)
&= \phi_{t,a}(O) - S_a(t), \\
\phi_{t,a}(O)
&= S_0(t\mid a,W)\Bigg\{
1 - \frac{\mathbb{I}\{A=a\}}{\pi_0(a\mid W)}
\Bigg[
\frac{\mathbb{I}\{U\le t,\ \Delta=1\}}{S_0(U\mid a,W)\,G_0(U\mid a,W)}
- \int_0^{t\wedge U} \frac{d\Lambda_0(u\mid a,W)}{S_0(u\mid a,W)\,G_0(u\mid a,W)}
\Bigg]
\Bigg\}.
\end{align*}
\end{proposition}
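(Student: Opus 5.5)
We will derive the EIF by the standard chain-rule (Gateaux derivative) calculation for the observed-data functional
\[
\Psi(P)=\int S_P(t\mid a,w)\,dP_W(w),\qquad S_P(t\mid a,w)=\prod_{(0,t]}\{1-d\Lambda_P(u\mid a,w)\},
\]
where $\Lambda_P$ is given by the observable representation of Proposition~\ref{prop:identification-censoring}. Under (A3) and (A5) this is a well-defined functional of the observed-data law. Take a regular one-dimensional submodel $P_\epsilon$ with score $s(O)=s_W(W)+s_A(A\mid W)+s_{U,\Delta}(U,\Delta\mid A,W)$.

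Differentiating at $\epsilon=0$ splits $\frac{d}{d\epsilon}\Psi(P_\epsilon)$ into two pieces.

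\begin{enumerate}
\item The first piece comes from the marginal of $W$. It is $\mathbb{E}[\{S_0(t\mid a,W)-S_a(t)\}s_W(W)]$, which contributes $S_0(t\mid a,W)-S_a(t)$ to the gradient.
\item The second piece is $\mathbb{E}[\frac{d}{d\epsilon}S_{P_\epsilon}(t\mid a,W)]$. By the Duhamel / product-integral derivative formula, and assuming continuity for simplicity,
\[
\frac{d}{d\epsilon}S_\epsilon(t\mid a,w)=-S_0(t\mid a,w)\int_0^t \frac{S_0(u-\mid a,w)}{S_0(u\mid a,w)}\,\frac{d}{d\epsilon}d\Lambda_\epsilon(u\mid a,w).
\]
\end{enumerate}

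We then compute the derivative of $\Lambda_\epsilon$. Write $d\Lambda=dN_1/Y$, where $N_1(u)=P(U\le u,\Delta=1\mid a,w)$ and $Y(u)=P(U\ge u\mid a,w)$. The derivative of the conditional cumulative hazard then has the familiar martingale influence-function representation
\[
\mathbb{E}\Big[\int_0^t \frac{dM(u)}{Y(u\mid a,W)}\,s(O)\,\Big|\,A=a,W\Big],
\qquad dM(u)=d\mathbb{I}\{U\le u,\Delta=1\}-\mathbb{I}\{U\ge u\}\,d\Lambda_0(u\mid a,W).
\]
To move from conditional to marginal expectation we weight by $\mathbb{I}\{A=a\}/\pi_0(a\mid W)$. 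Under (A4) we have $Y(u\mid a,w)=S_0(u-\mid a,w)\,G_0(u\mid a,w)$. Substituting this and cancelling $S_0(u-)$ gives the gradient contribution
\[
-S_0(t\mid a,W)\,\frac{\mathbb{I}\{A=a\}}{\pi_0(a\mid W)}\int_0^t\frac{dM(u)}{S_0(u\mid a,W)\,G_0(u\mid a,W)}.
\]

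Expanding $\int_0^t dM/(SG)$ yields exactly the bracket in the statement. The jump term is $\mathbb{I}\{U\le t,\Delta=1\}/\{S_0(U)G_0(U)\}$, and the compensator term is $\int_0^{t\wedge U} d\Lambda_0/(S_0G_0)$. Adding the $W$-piece gives $\phi_{t,a}(O)-S_a(t)$.

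It remains to check that this candidate is a valid EIF. It has mean zero: the martingale term has conditional mean zero given $(A,W)$. It lies in $L_2(P_0)$ because positivity and censoring support bound $1/(\pi_0 G_0)$ on $[0,\tau]$ and $S_0(t)/S_0(u)\le 1$ for $u\le t$. Since the model is nonparametric, the tangent space is saturated, so the unique gradient is the efficient influence function.

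\textbf{Main obstacle.} The delicate step is the rigorous differentiation of the product integral and of the ratio $dN_1/Y$ along the submodel when $\Lambda_0$ may have jumps. I would first carry out the computation assuming continuous $\Lambda_0$, then invoke Gill--Johansen compact differentiability of the product-integral map to handle the general case. Since this result is attributed to \citet{westling2024inference}, one may alternatively verify the claim by checking a von Mises expansion. That amounts to showing
\[
\Psi(P)-\Psi(P_0)+P_0\phi^*_P
\]
is second order, with remainder $\int S\,\int (G_0/G-1)\,d(\Lambda-\Lambda_0)\cdots$, via the Duhamel identity.
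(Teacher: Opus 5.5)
Your derivation is correct, and it takes a genuinely different, more complete route than the paper. The paper gives no proof of this proposition: it imports it from \citet{westling2024inference} and uses it only as Step~2 in the proof of Theorem~\ref{thm:eif-ah}.

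Your Gateaux computation produces the formula constructively. The $W$-marginal piece yields the centered outcome-regression term. The Duhamel derivative $-S_0(t)\int_0^t\{S_0(u-)/S_0(u)\}\,\tfrac{d}{d\epsilon}d\Lambda_\epsilon(u)$, combined with the derivative of $\int dN_1/Y$, gives a martingale integral with weight $S_0(u-)/\{S_0(u)Y(u)\}$. Assumption (A4) enters exactly through $Y(u\mid a,w)=S_0(u-\mid a,w)G_0(u\mid a,w)$, which collapses this weight to $1/\{S_0(u)G_0(u)\}$. This cancellation does not use continuity, so the formula already holds when $\Lambda_0$ has jumps. The product-integral obstacle you flag is therefore one of rigor (Gill--Johansen Hadamard differentiability plus dominated convergence in $w$), not of the form of the answer.

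Your fallback von Mises check matches the one ingredient the paper does display. The drift identity \eqref{eq:drift_app} in the proof of Theorem~\ref{thm:consistency} is exactly $\Psi(P)-\Psi(P_0)+P_0\phi^*_P$, since the $W$-marginal of $P$ drops out. To verify the IPTW part, your remainder must carry the full factor $\pi_0G_0/(\pi G)-1$, as there, not only $G_0/G-1$. The derivation buys transparency about where IPTW, IPCW and the martingale residual come from. The remainder identity buys economy, since the same product structure later delivers double robustness and the rate conditions (C4).

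One justification needs a stronger hypothesis than the one you cite. (A3) and (A5) give only $\pi_0(a\mid W)>0$ and $G_0(\tau\mid a,W)>0$ almost surely, not a uniform lower bound, so they do not bound $1/(\pi_0G_0)$. Using $\int_0^t d\Lambda_0(u)/S_0(u)=1/S_0(t)-1$, the conditional second moment given $W$ of the augmentation term lies between $S_0(t\mid a,W)F_0(t\mid a,W)/\pi_0(a\mid W)$ and $S_0(t\mid a,W)F_0(t\mid a,W)/\{\pi_0(a\mid W)G_0(t\mid a,W)\}$, where $F_0=1-S_0$. Under pointwise positivity alone, $\phi^*_{S_a(t)}$ can therefore fail to lie in $L_2(P_0)$, and then no gradient exists and the parameter is not pathwise differentiable. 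You should invoke strong positivity, e.g.\ the bounds in (C1). These also keep $Y(t\mid a,W)$ away from zero, which you need to differentiate $\int dN_1/Y$ uniformly in $w$ and to pass the derivative inside the $W$-integral. The proposition's ``under (A1)--(A5)'' has the same looseness.
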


It is often convenient to rewrite the bracketed term in counting-process form. Define
\[
N(u) = \mathbb{I}\{U\le u,\ \Delta=1\},\qquad
Y(u) = \mathbb{I}\{U\ge u\},
\]
so that $dN(u) - Y(u)\,d\Lambda_0(u\mid a,W)$ is a martingale increment. In this notation the bracketed term equals
\[
\int_0^{t\wedge U} \frac{dN(u)-Y(u)\,d\Lambda_0(u\mid a,W)}{S_0(u\mid a,W)\,G_0(u\mid a,W)}.
\]
The factor $\mathbb{I}\{A=a\}/\pi_0(a\mid W)$ implements inverse probability of treatment weighting (IPTW), while
$1/G_0(u\mid a,W)$ implements inverse probability of censoring weighting (IPCW). The EIF therefore combines a centered
outcome-regression component with an IPTW--IPCW augmentation based on martingale residuals; see also
\citet{westling2024inference, zhang_doublerobust_2012}.

\subsection{Efficient influence function for log average hazard via functional delta method}
\label{subsec:eif-log-ah}

In the previous subsection, we characterized the properties of the marginal survival curve $S_a(t)$. Because $S_a(t)=\mathbb{E}\left[S_0(t \mid a, W)\right]$ is defined via an expectation with respect to the data generating process, we can view the entire curve as a functional of the data distribution $P$.

Proposition~\ref{prop:eif-mu} shows that the map $P \mapsto S_a(\cdot)$ is pathwise differentiable in the nonparametric model \citep[see][Chs.~20, 25]{vaart1998asymptotic}; see also Appendix~\ref{appendix:eif-ah}. With the efficient influence function (EIF) process identified as $t\mapsto\phi^*_{S_a(t)}(O)$, we now introduce the log-average-hazard map
\[
\Phi(S)
\;=\;
\log\!\left(\frac{1-S(\tau)}{\int_0^\tau S(u)\,du}\right),
\qquad S\in L^\infty([0,\tau]),
\]
and verify that $\Phi$ is Hadamard differentiable at $\mu_a$ in the sense of
\citet[][Ch.~20]{vaart1998asymptotic}.  Under these two ingredients, a functional
delta method for pathwise differentiable parameters
\citep[][]{vaart1998asymptotic,bickel1993efficient}
yields the EIF of the composition
$\theta_a = \Phi\{S_a(\cdot)\}$ as the image of the survival EIF under the derivative
of $\Phi$.

Carrying out this program leads to the following result.

\begin{theorem}[Efficient influence function of log AH via survival]
\label{thm:eif-ah}
Under assumptions (A1)--(A5), the arm-specific log average hazard
$\theta_a=\log\!\big(F_a(\tau)/R_a(\tau)\big)$
(Equations~\eqref{eq:F-tau-R-tau}--\eqref{eq:AH}) is pathwise
differentiable in the nonparametric model with efficient influence function
\begin{equation}
\label{eq:eif-ah-from-mu}
\phi^*_{\theta_a}(O)
= -\,\frac{1}{F_a(\tau)}\,\phi^*_{S_a(\tau)}(O)
\;-\;\frac{1}{R_a(\tau)}\int_0^\tau \phi^*_{S_a(u)}(O)\,du,
\end{equation}
where $\phi^*_{S_a(t)}(O)$ is given in Proposition~\ref{prop:eif-mu}.
\end{theorem}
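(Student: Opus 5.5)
The plan is to use the chain rule for pathwise derivatives. The first ingredient is Proposition~\ref{prop:eif-mu}, which gives the curve-valued parameter $P\mapsto S_a(\cdot)$ and its influence process $t\mapsto\phi^*_{S_a(t)}(O)$. The second is a Hadamard derivative of $\Phi$, which we then compose with it.

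We work in the nonparametric model. Take a regular one-dimensional submodel $\{P_\varepsilon\}$ through $P_0$ with score $g$. Proposition~\ref{prop:eif-mu} gives, for each $t\in[0,\tau]$,
\[
\left.\frac{\partial}{\partial\varepsilon}S_{a,\varepsilon}(t)\right|_{\varepsilon=0}=\mathbb{E}_0\{\phi^*_{S_a(t)}(O)g(O)\}.
\]
To apply a functional chain rule we need more than pointwise differentiability. We need the difference quotient $\varepsilon^{-1}(S_{a,\varepsilon}-S_a)$ to converge in $L^\infty([0,\tau])$ to $h_g(t):=\mathbb{E}_0\{\phi^*_{S_a(t)}g\}$.

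This uniformity is the step I expect to be the main obstacle. I would first try to establish it by writing $S_{a,\varepsilon}(t)=\mathbb{E}_\varepsilon[S_\varepsilon(t\mid a,W)]$ with $S_\varepsilon(\cdot\mid a,w)$ the product integral of the submodel hazard. Under (A3) and (A5), the weights $1/\{\pi_0 S_0 G_0\}$ are bounded on $[0,\tau]$. Then $\sup_t|\phi^*_{S_a(t)}(O)|$ is bounded, and the remainder in the expansion is $o(\varepsilon)$ uniformly in $t$. This follows from the Duhamel identity for product integrals together with dominated convergence, and is essentially the argument of \citet{westling2024inference}. As an alternative that avoids the full process version, note that $\Phi$ depends on $S$ only through the two real functionals $S(\tau)$ and $\int_0^\tau S$. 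Then it suffices to show pathwise differentiability of $R_a(\tau)=\int_0^\tau S_a(u)\,du$. That follows from Fubini once we have a bound $|\varepsilon^{-1}(S_{a,\varepsilon}(u)-S_a(u))|\le M$ for dominated convergence, and it gives influence function $\int_0^\tau\phi^*_{S_a(u)}\,du$. The interchange $\int_0^\tau\mathbb{E}_0\{\phi^*_{S_a(u)}g\}\,du=\mathbb{E}_0\{g\int_0^\tau\phi^*_{S_a(u)}\,du\}$ is justified by boundedness.

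Next, compute the derivative of $\Phi$. Write $\Phi(S)=\log\{1-S(\tau)\}-\log\int_0^\tau S(u)\,du$. Both $S\mapsto S(\tau)$ and $S\mapsto\int_0^\tau S$ are bounded linear maps on $L^\infty([0,\tau])$. Positivity ensures $F_a(\tau)=1-S_a(\tau)>0$ (this is needed for $\theta_a$ to be defined) and $R_a(\tau)>0$ (which holds since $S_a(0)=1$ and $S_a$ is right-continuous). Hence $\log$ is differentiable at these points, and $\Phi$ is Fréchet, hence Hadamard, differentiable at $S_a$ with
\[
\Phi'_{S_a}(h)=-\frac{h(\tau)}{F_a(\tau)}-\frac{\int_0^\tau h(u)\,du}{R_a(\tau)}.
\]

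Finally, apply the chain rule: $\partial_\varepsilon\theta_{a,\varepsilon}|_{0}=\Phi'_{S_a}(h_g)$. Substituting $h_g(t)=\mathbb{E}_0\{\phi^*_{S_a(t)}g\}$ and using linearity with the Fubini step above gives $\mathbb{E}_0\{\phi^*_{\theta_a}(O)g(O)\}$, with $\phi^*_{\theta_a}$ as in \eqref{eq:eif-ah-from-mu}. It remains to check that $\phi^*_{\theta_a}$ has mean zero and finite variance. Both follow from the corresponding properties of $\phi^*_{S_a(t)}$, which is uniformly bounded. The model is nonparametric, so the tangent space is all of $L^2_0(P_0)$. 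The gradient is therefore unique and is the efficient influence function.
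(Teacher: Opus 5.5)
Your proposal is correct and follows essentially the same route as the paper: compute the (Fr\'echet, hence Hadamard) derivative $h\mapsto -h(\tau)/F_a(\tau)-R_a(\tau)^{-1}\int_0^\tau h(u)\,du$ of the log-AH map, then push the survival EIF process of Proposition~\ref{prop:eif-mu} through it by the chain rule. You are more careful than the paper, which simply asserts $\ell^\infty$-valued pathwise differentiability of $S_a(\cdot)$ by citing \citet{westling2024inference}; your reduction to the two scalar functionals $S_a(\tau)$ and $\int_0^\tau S_a(u)\,du$ (with Fubini) needs only a uniform bound on the difference quotients. Note, though, that this uniform bound on $\phi^*_{S_a(t)}$, and the positivity of $F_a(\tau)$, require strong positivity as in (C1) rather than (A3)/(A5) as stated---a strengthening the paper also uses implicitly.
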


A detailed proof, including formal definitions of pathwise and Hadamard
differentiability and the precise delta-method chain rule we use, is
provided in Appendix~\ref{appendix:eif-ah}.

\begin{remark}[Structure of the log-AH EIF]
\label{rem:structure-ah-EIF}
Substituting Proposition~\ref{prop:eif-mu} into~\eqref{eq:eif-ah-from-mu} yields the fully expanded form:
\begin{align*}
\phi^*_{\theta_a}(O)
&= -\frac{1}{F_a(\tau)}\!\left\{S_0(\tau\mid a,W)-S_a(\tau)\right\}
   -\frac{1}{R_a(\tau)}\!\int_0^\tau\!\left\{S_0(u\mid a,W)-S_a(u)\right\}\,du \\
&\quad -\frac{\mathbb{I}\{A=a\}}{\pi_0(a\mid W)}
\Bigg[
\frac{S_0(\tau\mid a,W)}{F_a(\tau)}\int_0^\tau \frac{dN(u)-Y(u)\,d\Lambda_0(u\mid a,W)}{G_0(u\mid a,W)} \\
&\qquad\qquad\qquad\qquad
+\frac{1}{R_a(\tau)}\int_0^\tau\Bigg(\int_u^\tau S_0(t\mid a,W)\,dt\Bigg)\,
\frac{dN(u)-Y(u)\,d\Lambda_0(u\mid a,W)}{G_0(u\mid a,W)}
\Bigg].
\end{align*}
Writing $R_0(t\mid a,W):=\int_0^t S_0(v\mid a,W)\,dv$, the third term (from the second to the third line) can be expressed as
\[
-\frac{\mathbb{I}\{A=a\}}{\pi_0(a\mid W)}
\int_0^\tau 
\underbrace{\Bigg[\frac{S_0(\tau\mid a,W)}{F_a(\tau)}
+\frac{R_0(\tau\mid a,W)-R_0(u\mid a,W)}{R_a(\tau)}\Bigg]}_{\text{\normalsize time-dependent weight }}
\frac{dN(u)-Y(u)\,d\Lambda_0(u\mid a,W)}{G_0(u\mid a,W)}.
\] 
The first two terms form a centered outcome-regression component for $F_a(\tau)$ and $R_a(\tau)$, while the
third term is an IPCW-weighted martingale term. The time-dependent weight $w_a(u,W)$ emphasizes event times
$u$ that contribute more to both the cumulative incidence $F_a(\tau)$ and the RMST $R_a(\tau)$, making the connection
between log-AH and person-time contributions explicit.
\end{remark}

\begin{remark}[Inheritance of double robustness]
\label{rem:dr-ah-EIF}
Because $\phi^*_{S_a(t)}$ is doubly robust in the sense of \citet{westling2024inference}, the mapped EIF $\phi^*_{\theta_a}$ inherits the same properties. Specifically, the estimating equation based on $\phi^*_{S_a(t)}$ has mean zero if either the outcome model $S_0$ is correctly specified, or both the treatment mechanism $\pi_0$ and censoring survival $G_0$ are correctly specified (assuming mild rate conditions under cross-fitting). Consequently, the estimator for $\theta_a(\tau)$ remains consistent under the same union of identification conditions.
\end{remark}

\subsection{Cross-fitted plug-in estimator for log AH via efficient survival}
\label{sec:cf-ah}

We estimate the log average hazard (log-AH) using a plug-in approach based on an efficient estimator of the arm-specific marginal survival curve. Specifically, we construct the cross-fitted one-step estimator for the survival path $t\mapsto S_a(t)=\mathbb{E}[S_0(t\mid a,W)]$ and then apply the smooth map defined in Section~\ref{sec:causal_quantity}. Under standard conditions, this survival estimator is asymptotically linear with influence function equal to the canonical gradient given in Proposition~\ref{prop:eif-mu}. By the functional delta method, the resulting log-AH estimator is asymptotically efficient, and its variance is consistently estimated by the empirical variance of the mapped EIF in Theorem~\ref{thm:eif-ah}.

\medskip
\noindent\textbf{Inputs.} Data $\{O_i=(W_i,A_i,U_i,\Delta_i): i=1,\ldots,n\}$; horizon $\tau$; number of folds $K$ (e.g., $K=5$).

\medskip
\noindent\textbf{Step 1: Cross-fitting of nuisances.}
Randomly partition the indices $\{1, \dots, n\}$ into $K$ folds $\{\mathcal{I}_k\}_{k=1}^K$. For each fold $k$:
\begin{enumerate}
    \item Train flexible learners on the complement $\mathcal{I}_k^c$ to obtain estimates for the propensity score $\hat\pi^{(-k)}(a\mid w)$, the conditional censoring distribution $\hat G^{(-k)}(t\mid a,w)$, and the conditional failure time distribution $\hat S^{(-k)}(t\mid a,w)$ (and its corresponding cumulative hazard $\widehat\Lambda^{(-k)}_0(t\mid a,w)$).
    \item For each subject $i\in\mathcal{I}_k$, compute the martingale residuals required for the bias correction:
    \[
    d\widehat M^{(-k)}_{i,a}(u) = dN_i(u) - Y_i(u)\,d\widehat\Lambda^{(-k)}_0(u\mid a,W_i),
    \]
    where $N_i(u)=\mathbb{I}\{U_i\le u, \Delta_i=1\}$ and $Y_i(u)=\mathbb{I}\{U_i\ge u\}$.
\end{enumerate}

\medskip

\noindent\textbf{Step 2: Efficient survival curve estimation.}
For each arm $a \in \{0,1\}$ and each time point $t \in [0, \tau]$, construct the one-step estimator $\hat S_a(t)$. We average the bias-corrected estimates across the sample:
\[
\hat S_a(t) = \frac{1}{n}\sum_{k=1}^K \sum_{i \in \mathcal{I}_k} \left( \hat S^{(-k)}(t\mid a,W_i) - \frac{\mathbb{I}\{A_i=a\}}{\hat\pi^{(-k)}(a\mid W_i)} \, \hat S^{(-k)}(t\mid a,W_i) \int_0^t \frac{d\widehat M^{(-k)}_{i,a}(u)}{\hat S^{(-k)}(u\mid a,W_i)\,\hat G^{(-k)}(u\mid a,W_i)} \right).
\]
To respect the logical constraint that survival probabilities are non-increasing, we follow \cite{westling2024inference} and subsequently apply monotonic regression (isotonic projection) to the path $\{\hat S_a(t) : t \in [0, \tau]\}$.

\medskip
\noindent\textbf{Step 3: Plug-in AH and log-AH.}
Compute the summary statistics using the efficient survival curves from Step 2:
\[
\hat F_a(\tau) = 1-\hat S_a(\tau), \qquad
\hat R_a(\tau) = \int_0^\tau \hat S_a(u)\,du,
\qquad
\hat\eta_a(\tau) = \frac{\hat F_a(\tau)}{\hat R_a(\tau)}.
\] 
The estimator for the log average hazard ratio is $\hat\theta(\tau) := \log(\hat\eta_1(\tau)) - \log(\hat\eta_0(\tau))$.

\medskip
\noindent\textbf{Step 4: Variance estimation.}
By Theorem~\ref{thm:eif-ah}, the influence function for the log-AH in arm $a$ is estimated by:
\[
\widehat\phi^{\,*}_{\theta_a}(O_i)
= -\frac{1}{\hat F_a(\tau)}\,\widehat\phi^{\,*}_{S_a(\tau)}(O_i)
-\frac{1}{\hat R_a(\tau)}\int_0^\tau \widehat\phi^{\,*}_{S_a(u)}(O_i)\,du.
\]
Here, $\widehat\phi^{\,*}_{S_a(t)}(O_i)$ is the plug-in estimate of the survival EIF (Proposition~\ref{prop:eif-mu}) evaluated using the nuisance estimates from the fold $\mathcal{I}_{k(i)}$ not containing $i$.
The variance of $\hat\theta(\tau)$ is estimated by the empirical variance of the difference in influence functions:
\[
\widehat{\mathrm{Var}}\big(\hat\theta(\tau)\big) = \frac{1}{n} \sum_{i=1}^n \left( \Big(\widehat\phi^{\,*}_{\theta_1}(O_i) - \widehat\phi^{\,*}_{\theta_0}(O_i)\Big) - \bar{\phi} \right)^2,
\]
where $\bar{\phi}$ is the sample mean of the estimated influence functions. Wald-type confidence intervals are constructed as $\hat\theta(\tau) \pm z_{1-\alpha/2} \sqrt{\widehat{\mathrm{Var}}(\hat\theta(\tau))/n}$.

\section{Large-Sample Theory for the AH Estimator}
\label{sec:large-sample-theory}

In this section, we establish large-sample properties of the proposed cross-fitted plug-in estimator
of the log average-hazard ratio $\theta(\tau)=\log\{\eta_1(\tau)/\eta_0(\tau)\}$.
Our argument combines: (i) asymptotic theory for cross-fitted one-step estimators of treatment-specific
marginal survival curves under right censoring \citep{westling2024inference}, and
(ii) the functional delta method for the smooth mapping from the survival path to log-AH
(Theorem~\ref{thm:eif-ah}).

\subsection{Assumptions}

We assume the identification conditions (A1)--(A5) from Section~\ref{sec:causal_quantity}.
Let $P_0$ denote the true data distribution and $\mathbb{P}_n$ the empirical distribution.
For measurable $f$, write $\|f\|_{2,P_0}=(\mathbb{E}_{P_0}[f^2])^{1/2}$ and
$\|f\|_\infty=\sup_x |f(x)|$. For functions indexed by $t\in[0,\tau]$, we use the
supremum norm $\|h\|_{\infty,[0,\tau]}=\sup_{t\in[0,\tau]}|h(t)|$.

\medskip

We impose the following conditions for each arm $a\in\{0,1\}$:
\begin{enumerate}
\item[(C1)] \textbf{Positivity and boundedness.}
There exist constants $\epsilon\in(0,1)$ and $\epsilon_S\in(0,1)$ such that almost surely,
\[
\pi_0(a\mid W)\ge \epsilon,\qquad
\inf_{t\in[0,\tau]} G_0(t\mid a,W)\ge \epsilon,\qquad
\inf_{t\in[0,\tau]} S_0(t\mid a,W)\ge \epsilon_S.
\]
Moreover, the target summaries satisfy
\[
F_a(\tau)\ge \epsilon,\qquad R_a(\tau)\ge \epsilon,
\]
so that inference for $\log \eta_a(\tau)$ is well-defined.

In addition, with probability tending to one (uniformly over folds $k$),
\[
\hat\pi^{(-k)}(a\mid W)\ge \epsilon,\qquad
\inf_{t\in[0,\tau]} \hat G^{(-k)}(t\mid a,W)\ge \epsilon,\qquad
\inf_{t\in[0,\tau]} \hat S^{(-k)}(t\mid a,W)\ge \epsilon_S,
\]
e.g., by truncation of $\hat\pi^{(-k)}$, $\hat G^{(-k)}$, and (if needed) $\hat S^{(-k)}$.

\item[(C2)] \textbf{Nuisance convergence to limits.}
The cross-fitted nuisance estimators converge in probability (uniformly over folds) to fixed limits
$(\pi_\infty, S_\infty, G_\infty)$ in $L_2(P_0)$. Specifically, as the sample size increases:
\begin{align*}
    \max_k\|\hat\pi^{(-k)}(a \mid W) - \pi_\infty(a \mid W)\|_{2,P_0} &\to 0, \\
    \max_k\sup_{t \in [0,\tau]} \|\hat S^{(-k)}(t \mid a,W) - S_\infty(t \mid a,W)\|_{2,P_0} &\to 0, \\
    \max_k\sup_{t \in [0,\tau]} \|\hat G^{(-k)}(t \mid a,W) - G_\infty(t \mid a,W)\|_{2,P_0} &\to 0.
\end{align*}

\item[(C3)] \textbf{Asymptotic linearity of the survival estimator.} 
The cross-fitted one-step estimator $\hat S_a$ (potentially including an isotonic projection to ensure monotonicity) is assumed to admit the uniform expansion:
\begin{equation}
\label{eq:AL_surv_large_sample}
\hat S_a(t) - S_a(t) = \mathbb{P}_n \!\left[ \phi^*_{S_a(t)}(O) \right] + \mathrm{Rem}_{n,a}(t), \qquad \|\mathrm{Rem}_{n,a}\|_{\infty,[0,\tau]} = o_p(n^{-1/2}),
\end{equation}
where $\phi^*_{S_a(t)}$ is the efficient influence function defined in Proposition~\ref{prop:eif-mu}. This expansion is imposed as a high-level condition to transfer inference from $\hat S_a(\cdot)$ to $\hat\theta(\tau)$.
It holds, for example, for the cross-fitted one-step survival estimator under additional regularity conditions;
see \citet[][Thm.~3]{westling2024inference}.

\item[(C4)] \textbf{Product-bias (orthogonality) rate conditions.}
The remainder in \eqref{eq:AL_surv_large_sample} is second order in nuisance errors.
Sufficient conditions are the product rates
\[
\|\hat\pi^{(-k)}(a\mid W)-\pi_0(a\mid W)\|_{2,P_0}\cdot
\sup_{t\in[0,\tau]}\|\hat S^{(-k)}(t\mid a,W)-S_0(t\mid a,W)\|_{2,P_0}
=o_p(n^{-1/2}),
\]
\[
\sup_{t\in[0,\tau]}\|\hat G^{(-k)}(t\mid a,W)-G_0(t\mid a,W)\|_{2,P_0}\cdot
\sup_{t\in[0,\tau]}\|\hat S^{(-k)}(t\mid a,W)-S_0(t\mid a,W)\|_{2,P_0}
=o_p(n^{-1/2}),
\]
uniformly over folds $k$.
In particular, these hold if the relevant nuisance components are consistent at rates
faster than $n^{-1/4}$ in $L_2(P_0)$.
\end{enumerate}

\begin{remark}[How (C3)--(C4) are used]
Condition (C3) is a high-level uniform asymptotic linear representation for $\hat S_a(\cdot)$ in $\ell^\infty([0,\tau])$,
which is the only input needed to obtain the CLT for $\hat\theta(\tau)$ via the functional delta method.
Condition (C4) is a convenient sufficient product-rate condition that, together with positivity/truncation, controls
second-order remainder terms for one-step estimators.
\end{remark}



\subsection{Main results}

\begin{theorem}[Consistency and double robustness]
\label{thm:consistency}
Assume (A1)--(A5) and (C1)--(C2). Suppose that for each arm $a\in\{0,1\}$ the nuisance limits satisfy
either
\begin{enumerate}
\item[(i)] $S_\infty(t\mid a,W)=S_0(t\mid a,W)$ for almost all $(t,W)$, or
\item[(ii)] $\pi_\infty(a\mid W)=\pi_0(a\mid W)$ and $G_\infty(t\mid a,W)=G_0(t\mid a,W)$ for almost all $(t,W)$.
\end{enumerate}
Then $\|\hat S_a-S_a\|_{\infty,[0,\tau]}\xrightarrow{P}0$, and consequently
\[
\hat\eta_a(\tau)\xrightarrow{P}\eta_a(\tau),
\qquad
\hat\theta(\tau)\xrightarrow{P}\theta(\tau).
\]
Thus $\hat\theta(\tau)$ is consistent under either nuisance regime (i) or (ii), i.e., it is doubly robust.
\end{theorem}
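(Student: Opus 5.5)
The argument has three stages: uniform consistency of the one-step survival curve, transfer of that consistency through the isotonic projection, and the continuous mapping theorem for the smooth map $\Phi$ introduced before Theorem~\ref{thm:eif-ah}.

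\textbf{Decomposition.} Fix the arm $a$ and write the cross-fitted one-step estimator (before projection) as
\[
\hat S_a(t)=\frac1n\sum_{k}\sum_{i\in\mathcal I_k}\psi_t(O_i;\hat\pi^{(-k)},\hat S^{(-k)},\hat G^{(-k)}),
\]
where $\psi_t(O;\pi,S,G)$ is the uncentered EIF $\phi_{t,a}$ of Proposition~\ref{prop:eif-mu} evaluated at generic nuisances. Let $\psi^\infty_t=\psi_t(\cdot;\pi_\infty,S_\infty,G_\infty)$. Then
\[
\hat S_a(t)-S_a(t)=(\mathbb P_n-P_0)\psi^\infty_t+\sum_k\frac{|\mathcal I_k|}{n}(\mathbb P_{n,k}-P_0)\{\hat\psi^{(-k)}_t-\psi^\infty_t\}+\sum_k\frac{|\mathcal I_k|}{n}P_0\{\hat\psi^{(-k)}_t-\psi^\infty_t\}+\{P_0\psi^\infty_t-S_a(t)\}.
\]

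\textbf{The four terms.}
\begin{itemize}
\item \emph{Limit term.} I show $P_0\psi^\infty_t=S_a(t)$ under either (i) or (ii). This is the standard double-robustness computation (as in \citet{westling2024inference}). Under (i), the martingale increment $dN-Y\,d\Lambda_0$ has conditional mean zero given $(A,W)$ by (A4) for any weights, so only $S_0(t\mid a,W)$ survives. Under (ii), condition on $(A=a,W)$ and use the Duhamel identity
\[
\frac{S_\infty(t)}{S_0(t)}-1=-\int_0^t\frac{S_\infty(t)}{S_\infty(u)}\frac{S_0(u^-)}{S_0(u)}\,d(\Lambda_\infty-\Lambda_0)(u).
\]
Together with $E[dN\mid a,W]=S_0G_0\,d\Lambda_0$ and $E[Y\mid a,W]=S_0G_0$, the IPW--IPCW augmentation then exactly cancels the bias of $S_\infty$.
\item \emph{Bias term.} $P_0\{\hat\psi^{(-k)}_t-\psi^\infty_t\}\to0$ uniformly in $t$. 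The map $(\pi,S,G)\mapsto\psi_t$ is Lipschitz in $L_2(P_0)$ on the set where (C1) holds, with denominators bounded below by $\epsilon^2\epsilon_S$. The one subtle piece is the stochastic integral against $d\hat\Lambda$. I control it by writing $d\Lambda=-dS/S^-$ and integrating by parts, which converts it into terms involving $\sup_t|\hat S-S_\infty|$. Combined with (C2) and Cauchy--Schwarz, this gives convergence in probability.
\item \emph{Empirical process terms.} Conditionally on the training folds, $\hat\psi^{(-k)}_t-\psi^\infty_t$ is a fixed function class indexed by $t$. Its $L_2$ size tends to zero, and each $\psi_t$ is a product of monotone or bounded-variation functions of $t$, so the class indexed by $t\in[0,\tau]$ has bracketing entropy controlled uniformly. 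Hence the cross-fit term is $o_p(1)$ uniformly in $t$, in fact $o_p(n^{-1/2})$. The first term is $o_p(1)$ uniformly by a Glivenko--Cantelli argument for the class $\{\psi^\infty_t\}$.
\end{itemize}
This yields $\sup_t|\hat S_a(t)-S_a(t)|\to_P0$.

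\textbf{Isotonic projection.} Since $S_a$ is nonincreasing, the isotonic projection of a path is within $\sup$-distance $\|\hat S_a-S_a\|_\infty$ of $S_a$. Concretely, for any monotone $g$, $\|\mathrm{iso}(h)-g\|_\infty\le\|h-g\|_\infty$ for the sup-norm-type projection (pool-adjacent-violators values lie between local extremes). So uniform consistency survives the projection.

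\textbf{Continuous mapping.} Uniform consistency gives $|\hat S_a(\tau)-S_a(\tau)|\to0$ and $|\hat R_a-R_a|\le\tau\|\hat S_a-S_a\|_\infty\to0$. By (C1), $F_a,R_a\ge\epsilon$, so the map $(F,R)\mapsto F/R$ is continuous near the truth, as is $\log$. The continuous mapping theorem then gives $\hat\eta_a(\tau)\to_P\eta_a(\tau)$ for each arm, and hence $\hat\theta(\tau)\to_P\theta(\tau)$.

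\textbf{Main obstacle.} I expect the hardest step to be the uniform-in-$t$ control of the bias term involving $\int 1/(\hat S\hat G)\,d\hat\Lambda$. $L_2$ convergence of $\hat S$ does not directly control the Stieltjes measure $d\hat\Lambda$. My first approach is the integration by parts described above. If that fails, I would assume $\hat\Lambda$ has uniformly bounded variation, as in \citet{westling2024inference}.
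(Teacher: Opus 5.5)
Your proof is correct in substance, but it is organized differently from the paper's.

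The paper uses the two-term split $\hat S_a(t)-S_a(t)=(\mathbb{P}_n-P_0)\phi_{t,a}(\cdot,\hat\nu)+\{P_0\phi_{t,a}(\cdot,\hat\nu)-S_a(t)\}$ and handles the empirical-process term by cross-fitting, citing Westling et al. It treats the drift through the mixed-bias representation evaluated at the \emph{estimated} nuisances, $\mathbb{E}\big[\hat S(t)\int_0^t \frac{S_0(u^-)}{\hat S(u)}\big\{\frac{\pi_0 G_0(u)}{\hat\pi\,\hat G(u)}-1\big\}\,d(\hat\Lambda-\Lambda_0)(u)\big]$, and argues that one factor or the other vanishes under (i) or (ii). It then applies continuous mapping and never addresses the isotonic projection.

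You instead center at the limits $(\pi_\infty,S_\infty,G_\infty)$. You prove the exact identity $P_0\psi^\infty_t=S_a(t)$, using the martingale property under (i) and Duhamel under (ii). Everything else then reduces to continuity of $\nu\mapsto P_0\psi_t(\nu)$ in the (C2) topology, plus uniform empirical-process bounds.

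The two routes buy different things. The paper's route exhibits the product structure that is reused for the rate conditions (C4). Yours separates the algebraic double robustness from the analytic continuity step, which is the natural logic when only one nuisance is consistent. It also covers the isotonic step, which the paper's estimator uses but its proof skips. Your contraction $\|\mathrm{iso}(h)-g\|_\infty\le\|h-g\|_\infty$ for monotone $g$ does hold for the least-squares isotonic fit, via its max--min representation. Both routes ultimately need the same control of Stieltjes integrals against $d\hat\Lambda$: the paper's ``zero in the limit'' is exactly your bias-term step.

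Two small repairs are needed.

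First, your Duhamel display has $S_0(u)$ where $S_0(t)$ belongs. The identity you actually need is $S_0(t)-S_\infty(t)=S_\infty(t)\int_0^t \frac{S_0(u^-)}{S_\infty(u)}\,d(\Lambda_\infty-\Lambda_0)(u)$. This is precisely the conditional mean of the augmentation term under (ii), which gives $\mathbb{E}[\psi^\infty_t\mid W]=S_0(t\mid a,W)$.

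Second, in the bias term, bounded variation of $\hat\Lambda$ is not the real obstacle. It is automatic here: $\hat\Lambda(\cdot\mid a,w)$ is nondecreasing and, by (C1), bounded on $[0,\tau]$ by $1/\epsilon_S$. The real issue is that integrating by parts produces $\mathbb{E}[\sup_{u\le\tau}|\hat S(u\mid a,W)-S_\infty(u\mid a,W)|]$, and the analogue for $\hat G$. Condition (C2), however, only controls $\sup_{u}\|\hat S(u\mid a,W)-S_\infty(u\mid a,W)\|_{2,P_0}$, with the supremum outside the norm. You can bridge this in either of two ways:
\begin{itemize}
\item use a fixed grid together with monotonicity of $\hat S(\cdot\mid a,w)$ and $\hat G(\cdot\mid a,w)$ and continuity in $t$ of their limits;
\item strengthen (C2) so that the supremum sits inside the norm.
\end{itemize}
The paper's drift argument tacitly requires the same.
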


\begin{theorem}[Asymptotic normality and efficiency]
\label{thm:clt}
Assume the conditions of Theorem~\ref{thm:consistency}. If, in addition, (C3)--(C4) hold for both arms,
then
\[
\sqrt{n}\big\{\hat\theta(\tau)-\theta(\tau)\big\}
=
\frac{1}{\sqrt{n}}\sum_{i=1}^n \phi^*_{\theta}(O_i)
+o_p(1),
\qquad
\phi^*_{\theta}=\phi^*_{\theta_1}-\phi^*_{\theta_0},
\]
where $\phi^*_{\theta_a}$ is the efficient influence function in
Theorem~\ref{thm:eif-ah}. Consequently,
\[
\sqrt{n}\big\{\hat\theta(\tau)-\theta(\tau)\big\}
\ \rightsquigarrow\
\mathcal{N}\!\left(0,\ \mathrm{Var}\{\phi^*_{\theta}(O)\}\right).
\]
Moreover, if the nuisance estimators are consistent for $(S_0,G_0,\pi_0)$,
then $\hat\theta(\tau)$ is asymptotically efficient in the nonparametric model.
\end{theorem}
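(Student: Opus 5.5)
The plan is to push the uniform expansion (C3) through the functional delta method for the map $\Phi$ from Section~\ref{subsec:eif-log-ah}. I then stack the two arms and read off efficiency from Theorem~\ref{thm:eif-ah}.

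\textbf{Step 1 (the survival process is tight).} By (C3), for each $a$,
\[
\sqrt n(\hat S_a-S_a)=\mathbb{G}_n\phi^*_{S_a(\cdot)}+o_p(1)
\]
in $\ell^\infty([0,\tau])$. The index class $\{\phi^*_{S_a(t)}:t\in[0,\tau]\}$ is $P_0$-Donsker. The argument goes as follows. Under (C1), $S_0(t\mid a,W)$ is monotone in $t$ and bounded, and $1/(S_0G_0)$ is bounded by $(\epsilon\epsilon_S)^{-1}$. The martingale integral $\int_0^{t\wedge U}(\cdot)$ is then a product of bounded monotone-in-$t$ pieces, or differences of such pieces, plus indicator processes $\mathbb I\{U\le t\}$. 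Each of these classes is VC or has bounded variation, and finite sums and products of bounded Donsker classes remain Donsker. Hence $\sqrt n(\hat S_1-S_1,\hat S_0-S_0)$ converges weakly to a tight mean-zero Gaussian process $(\mathbb Z_1,\mathbb Z_0)$ in $\ell^\infty([0,\tau])^2$. Alternatively, I will cite \citet[Thm.~3]{westling2024inference} for this.

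\textbf{Step 2 (delta method).} $\Phi(S)=\log(1-S(\tau))-\log\int_0^\tau S$ is Hadamard differentiable at $S_a$, tangentially to $\ell^\infty([0,\tau])$. This holds because $S\mapsto S(\tau)$ and $S\mapsto\int_0^\tau S$ are bounded linear maps, $\log$ is differentiable away from zero, and $F_a(\tau),R_a(\tau)\ge\epsilon$ by (C1). The derivative is
\[
\Phi'_{S_a}(h)=-h(\tau)/F_a(\tau)-\int_0^\tau h/R_a(\tau).
\]
The map $\theta=\Phi(S_1)-\Phi(S_0)$ is differentiable on the product space. The functional delta method \citep[Thm.~20.8]{vaart1998asymptotic} gives
\[
\sqrt n(\hat\theta-\theta)=\Phi'_{S_1}(\sqrt n(\hat S_1-S_1))-\Phi'_{S_0}(\sqrt n(\hat S_0-S_0))+o_p(1).
\]
Since $\Phi'$ is continuous and linear, substituting the (C3) expansion and using $\sup_t|\mathrm{Rem}_{n,a}(t)|=o_p(n^{-1/2})$ (together with $\tau<\infty$ for the integral term) yields $\mathbb{G}_n[\Phi'_{S_1}(\phi^*_{S_1(\cdot)})-\Phi'_{S_0}(\phi^*_{S_0(\cdot)})]+o_p(1)$. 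One point needs checking here: evaluating the linear map on the empirical process pointwise in $O$ requires interchanging $\mathbb{P}_n$ and $\int_0^\tau du$. This is Fubini, justified by the uniform boundedness of $\phi^*$ under (C1). By Theorem~\ref{thm:eif-ah}, the integrand is exactly $\phi^*_{\theta_1}-\phi^*_{\theta_0}=\phi^*_\theta$. This gives the stated linear representation.

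\textbf{Step 3 (normality and efficiency).} $\phi^*_\theta$ is mean zero and bounded, so it has finite variance, and the classical CLT yields $\mathcal N(0,\mathrm{Var}\,\phi^*_\theta)$. For efficiency, Theorem~\ref{thm:eif-ah} and linearity of gradients give that $\phi^*_\theta$ is the canonical gradient of $\theta$ in the nonparametric model, whose tangent space is saturated. An estimator that is asymptotically linear with the canonical gradient is regular and efficient \citep[Ch.~25]{vaart1998asymptotic}. The role of nuisance consistency for $(S_0,G_0,\pi_0)$ is to make (C3) hold with the true EIF rather than a limit-misspecified one. Under (C4) this is exactly what the theorem assumes.

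\textbf{Main obstacle.} I expect the hardest step to be Step 1, establishing tightness/Donsker for the index class and hence the uniformity needed by the Hadamard delta method; everything else is routine. The isotonic projection is a second concern. I plan to handle it by noting that (C3) is assumed for the projected estimator. Alternatively, projection onto monotone functions does not increase sup-norm distance to the monotone truth $S_a$, so any $o_p(n^{-1/2})$ uniform remainder is preserved.
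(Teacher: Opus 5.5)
Your proposal is correct and takes essentially the paper's route: the uniform expansion (C3), Hadamard differentiability of $S\mapsto\log\{(1-S(\tau))/\int_0^\tau S\}$ at $S_a$ using $F_a(\tau),R_a(\tau)\ge\epsilon$, the functional delta method applied jointly to both arms, and the i.i.d.\ CLT. Your Donsker and Fubini checks make explicit what the paper leaves implicit, and the Donsker step you call the main obstacle can in fact be skipped: the map depends on $S$ only through $S(\tau)$ and $\int_0^\tau S$, so applying (C3) to these two scalars and then a bivariate delta method suffices.

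The one flaw is your fallback argument for the isotonic step. The sup-norm contraction of isotonic projection bounds the distance from the projected curve to $S_a$, but it does not bound the distance from the projected curve to the unprojected one, and that distance is what must be $o_p(n^{-1/2})$ to preserve the remainder. So rely, as you primarily do, on (C3) being imposed on the projected estimator.
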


\begin{corollary}[Inference and Variance Estimation]
\label{cor:variance}
Under the conditions of Theorem~\ref{thm:clt}, a consistent estimator for the asymptotic variance $\sigma^2_\theta = \mathrm{Var}\{\phi^*_\theta(O)\}$ is given by the cross-fitted empirical variance:
\[
\hat\sigma^2_\theta = \frac{1}{n} \sum_{i=1}^n \left( \widehat\phi^{\,*}_{\theta}(O_i) - \bar{\phi}_{\theta} \right)^2,
\]
where $\bar{\phi}_{\theta} = n^{-1}\sum_{i=1}^n \widehat\phi^{\,*}_{\theta}(O_i)$. Wald-type $(1-\alpha)\times 100\%$ confidence intervals for $\theta(\tau)$ constructed as $\hat\theta(\tau) \pm z_{1-\alpha/2} \hat\sigma_\theta/\sqrt{n}$ possess nominal coverage asymptotically.
\end{corollary}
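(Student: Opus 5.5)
To prove Corollary~\ref{cor:variance}, the plan is to show $\hat\sigma^2_\theta \xrightarrow{P} \sigma^2_\theta$ and then combine this with Theorem~\ref{thm:clt} via Slutsky's lemma. Since $\sigma^2_\theta>0$ (implicitly assumed for a nondegenerate limit), the studentized statistic $\sqrt{n}\{\hat\theta(\tau)-\theta(\tau)\}/\hat\sigma_\theta$ converges to $\mathcal{N}(0,1)$. Coverage of the Wald interval then follows from the continuity of the normal distribution function at $\pm z_{1-\alpha/2}$.

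For variance consistency, write $\hat\sigma^2_\theta = \mathbb{P}_n(\widehat\phi^{\,*}_\theta)^2 - (\mathbb{P}_n\widehat\phi^{\,*}_\theta)^2$. It suffices to show
\[
\mathbb{P}_n\big(\widehat\phi^{\,*}_\theta-\phi^*_\theta\big)^2 \xrightarrow{P} 0.
\]
Indeed, Cauchy--Schwarz together with the law of large numbers for $\mathbb{P}_n\phi^*_\theta$ and $\mathbb{P}_n(\phi^*_\theta)^2$ then give the conclusion, because $\phi^*_\theta$ is bounded under (C1) and $P_0\phi^*_\theta=0$.

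We decompose fold by fold. On fold $\mathcal{I}_k$, conditional on the training data $\mathcal{I}_k^c$, the summands $(\widehat\phi^{\,*}_\theta-\phi^*_\theta)^2(O_i)$ are i.i.d. By the conditional Markov inequality, it suffices to show
\[
\max_k \big\|\widehat\phi^{\,*(-k)}_{\theta}-\phi^*_\theta\big\|_{2,P_0}\to 0
\]
in probability. This $L_2$ convergence is the main obstacle. The estimated influence function $\widehat\phi^{\,*}_{\theta_a}$ differs from $\phi^*_{\theta_a}$ in three ways.
\begin{enumerate}
\item[(a)] The scalars $\hat F_a(\tau)$ and $\hat R_a(\tau)$ replace $F_a(\tau)$ and $R_a(\tau)$. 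These are consistent by Theorem~\ref{thm:consistency} and bounded away from zero by (C1).
\item[(b)] The centering $\hat S_a(t)$ replaces $S_a(t)$. It converges uniformly by Theorem~\ref{thm:consistency}.
\item[(c)] The nuisances $(\hat\pi,\hat G,\hat S,\hat\Lambda)$ replace $(\pi_0,G_0,S_0,\Lambda_0)$ inside $\phi_{t,a}$.
\end{enumerate}

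For (c) we use the expanded form in Remark~\ref{rem:structure-ah-EIF}. We write the difference as a telescoping sum in which one nuisance is changed at a time.

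The outcome-regression terms are Lipschitz in $\hat S$ and $S_0$. Their $L_2$ error is therefore bounded by $\sup_t\|\hat S-S_0\|_{2,P_0}$, and the integral over $[0,\tau]$ contributes only a factor $\tau$.

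The IPTW factor is handled using $\hat\pi\ge\epsilon$: the difference is bounded by $\epsilon^{-2}|\hat\pi-\pi_0|$ times a bounded martingale integral.

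The martingale term splits into two pieces. The first is the $dN$ part, which is a single jump with integrand bounded via $\hat G,\hat S\ge\epsilon,\epsilon_S$. The second is the compensator part $\int Y\,d\hat\Lambda/\hat G$. Here we convert $d\hat\Lambda$ to $-d\hat S/\hat S$ and integrate by parts. This expresses the difference through $\sup_t|\hat S-S_0|$ and $\sup_t|\hat G-G_0|$, avoiding any requirement on the total variation of $\hat\Lambda-\Lambda_0$. This step uses that $\hat S$ and $S_0$ are monotone and bounded, so their variation on $[0,\tau]$ is at most one.

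The resulting bound controls the $L_2$ error by the sup-in-$t$ norms appearing in (C2). When the nuisance limits equal the truth, these norms tend to zero. The sup over $t$ inside the norm requires some care. If needed, we use dominated convergence pointwise in $t$, combined with the monotonicity of the survival functions, to pass to the uniform statement.

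When only one of the nuisance regimes (i) or (ii) holds, the same argument gives convergence to the variance of the influence function evaluated at the limits $(\pi_\infty,S_\infty,G_\infty)$. Under (C4), however, all three nuisances are consistent for the truth, so this limit is $\mathrm{Var}\{\phi^*_\theta(O)\}$, which closes the argument.

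Finally, the conditional argument within each fold yields $\mathbb{P}_n(\widehat\phi^{\,*}_\theta-\phi^*_\theta)^2=o_p(1)$ after averaging over the finitely many folds $K$. Applying Slutsky's lemma then completes the proof.
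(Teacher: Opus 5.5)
Most of your argument is sound, but the step that closes it fails. The sound part is the reduction to $\mathbb{P}_n(\widehat\phi^{\,*}_\theta-\phi^*_\theta)^2\to0$, followed by conditional Markov fold by fold, Lipschitz and integration-by-parts bounds, monotonicity to move the supremum inside the $L_2(P_0)$ norm (given continuity in $t$ of the limits), and then Slutsky. This is the argument behind the paper's proof, which is a single sentence at the end of the proof of Theorem~\ref{thm:clt} asserting that $\hat\sigma^2_\theta$ is consistent ``because $\widehat\phi^{\,*}_\theta$ is a cross-fitted plug-in estimator of $\phi^*_\theta$.'' You correctly observe that your bounds only give $\hat\sigma^2_\theta\to\mathrm{Var}\{\phi_\theta(O;\nu_\infty)\}$ with $\nu_\infty=(\pi_\infty,S_\infty,G_\infty)$. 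You then claim that under (C4) all three nuisances are consistent for the truth, and that is false. (C4) only restricts the products $\|\hat\pi-\pi_0\|_{2,P_0}\cdot\sup_t\|\hat S-S_0\|_{2,P_0}$ and $\sup_t\|\hat G-G_0\|_{2,P_0}\cdot\sup_t\|\hat S-S_0\|_{2,P_0}$. Consistency of every nuisance faster than $n^{-1/4}$ is merely the paper's sufficient condition for these. For instance, in regime (ii) with $\pi_0$ and $G_0$ known by design, both products vanish identically while $S_\infty\neq S_0$, and then $\phi_\theta(\cdot;\nu_\infty)\neq\phi^*_\theta$ in general. What excludes this case is (C3), not (C4). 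There the drift \eqref{eq:drift_app} is exactly zero, so the estimator's influence function is the AIPW function at $S_\infty$, and the expansion in (C3) with the efficient influence function fails.

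You can repair the step in one of two ways.
\begin{itemize}
\item Assume $\nu_\infty=(\pi_0,S_0,G_0)$ explicitly. This is what the paper's one-liner tacitly presupposes.
\item Derive the needed identity from (C3) by uniqueness of influence functions. Your Lipschitz bounds, run with $\nu_\infty$ in place of $\nu_0$, together with cross-fitting, give $(\mathbb{P}_n-P_0)\{\phi_{t,a}(\cdot;\hat\nu)-\phi_{t,a}(\cdot;\nu_\infty)\}=o_p(n^{-1/2})$. Add that the drift is $o_p(n^{-1/2})$, which is what (C4) is meant to ensure. This yields $\hat S_a(t)-S_a(t)=(\mathbb{P}_n-P_0)\phi_{t,a}(\cdot;\nu_\infty)+o_p(n^{-1/2})$, modulo the isotonic step. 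Subtracting (C3) leaves $\sqrt{n}(\mathbb{P}_n-P_0)D_t=o_p(1)$ for the fixed function $D_t=\phi_{t,a}(\cdot;\nu_\infty)-\phi_{t,a}(\cdot;\nu_0)$. The CLT then forces $\mathrm{Var}(D_t)=0$, and double robustness gives $P_0D_t=0$. Hence $D_t=0$ a.s.\ for each $t$, and integrating in $t$ gives $\phi_\theta(\cdot;\nu_\infty)=\phi^*_\theta$ a.s.
\end{itemize}
A smaller imprecision: conditional on $\mathcal{I}_k^c$ the summands are not i.i.d., because $\hat F_a(\tau)$, $\hat R_a(\tau)$ and $\hat S_a(\cdot)$ use the full sample. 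Handle (a)--(b) first with a bound uniform in $O$, using that the plug-in $\phi_{t,a}$ is uniformly bounded under the truncation in (C1). The centering in (b) is a constant and drops out of an empirical variance anyway.
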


The proofs rely on the Hadamard differentiability of the log-AH mapping $S\mapsto \log\big((1-S(\tau))/\int_0^\tau S\big)$ and the functional delta method; see Appendix~\ref{app:proof-consistency} and Appendix~\ref{app:proof-AN}.

\section{Simulation study}
\label{sec:simulation}

We conducted a Monte Carlo study to evaluate the finite-sample performance of the proposed average hazard estimators under non-proportional hazards and proportional hazards settings. For each simulation scenario we generated independent datasets with sample sizes
$n \in \{500, 750, 1000, 1250, 1500\}$ and evaluated the log average-hazard ratio
$\theta(\tau) = \log\{\eta_1(\tau)/\eta_0(\tau)\}$ at a follow-up horizon of $\tau = 12$ months, where $\eta_a(\tau)$ denotes the treatment-specific average hazard under $A=a$.

Across all scenarios, baseline covariates represent age, body-mass index, and a continuous risk score. Treatment is assigned according to a nonlinear propensity score model (Appendix~\ref{app:sim-dgp}), so that simple parametric models for the treatment mechanism are intentionally misspecified. Intercepts in the event and censoring models are calibrated so that, under control, the marginal probability of censoring by 12 months is approximately 20\% and the marginal event probability is about 15\%. Administrative follow-up is truncated at $24$ months in all scenarios.

We focus on two representative data generation mechanisms (DGMs) in the main text: a smooth non-proportional hazards DGM (non-PH) and a proportional hazards DGM with linear exponential censoring (PH). The PH DGM uses the same Weibull proportional-hazards event-time model as the PH-complex censoring DGM adapted from \citet{westling2024inference}, but replaces censoring with an exponential distribution having a linear log-rate in $(A,W)$ (Appendix~\ref{app:sim-dgp}). This comparison is useful because, in PH, the censoring mechanism is consistently estimable by our current nuisance library, whereas in PH-complex censoring the censoring model is too complex to be consistently estimable by our existing library; consequently, the practical benefits from double robustness can be attenuated when both outcome and censoring components are poorly captured. Full definitions of all DGMs, including PH-complex censoring and a crossing-hazards DGM (cross-A), are provided in Appendix~\ref{app:sim-dgp}.

For each simulated dataset we applied three estimators:
\begin{enumerate}
    \item \texttt{AH-DML}: a cross-fitted, doubly robust estimator of the treatment-specific survival curves and average hazards, implemented via \texttt{CFsurvival} with a stacked library of survival learners;
    \item \texttt{G-computation}: an outcome-regression plug-in estimator based on a survival Super Learner without inverse probability weighting or augmentation;
    \item \texttt{Cox}: a marginal Cox proportional hazards working model with baseline covariate adjustment.
\end{enumerate}
Implementation details and learner libraries are provided in Appendix~\ref{app:sim-methods}. The simulation grid consists of $5$ sample sizes and $1000$ independent replications per configuration.

\begin{table}[t]
\centering
\caption{Whether each nuisance component is consistently estimable by our current library (by DGP).}
\label{tab:dgp-consistency}
\begin{tabular}{lccc}
\toprule
DGP & Outcome & Censoring & Propensity \\
\midrule
non-PH   & yes & no  & yes \\
cross-A  & yes & no  & yes \\
PH       & no  & yes & yes \\
PH-complex & no & no & yes \\
\bottomrule
\end{tabular}
\end{table}

\subsection{Simulation results}

Figure~\ref{fig:main-ratio-grid} summarizes percent bias, variance (rescaled by $100$), MSE (rescaled by $100$), and empirical coverage of nominal $95\%$ intervals for the log average-hazard ratio $\theta(\tau)$ under the non-PH and PH DGMs. Across sample sizes, \texttt{AH-DML} achieves small bias and near-nominal coverage relative to the working-model alternatives, particularly under the non-PH setting where misspecification of proportional hazards is most consequential for Cox-type inference. In the PH setting with linear censoring, performance improves relative to the PH-complex censoring stress test (Appendix~\ref{app:sim-ph}), consistent with the censoring mechanism being consistently estimable in PH.

Additional results—including arm-specific average hazards, crossing-hazards (cross-A), and the PH-complex censoring stress test—are reported in Appendix~\ref{app:sim-extra}, together with the full set of Monte Carlo panels used in earlier drafts.

\begin{figure}[t]
  \centering
  \includegraphics[width=0.95\textwidth]{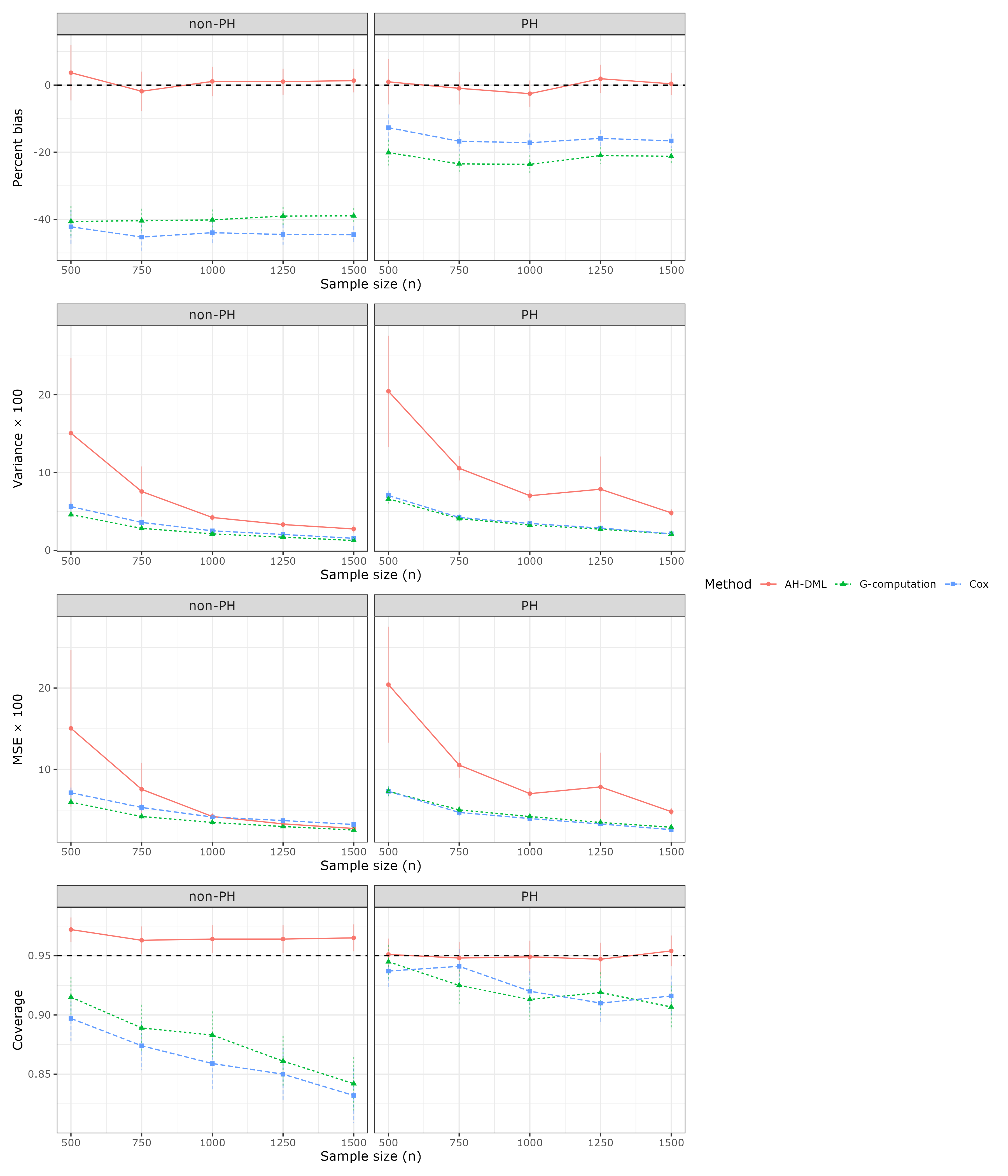}
  \caption{Main simulation summary for the log average-hazard ratio at $\tau=12$ months. Columns correspond to non-PH (left) and PH with linear exponential censoring (right). Rows report percent bias, variance $\times 100$, MSE $\times 100$, and empirical coverage of nominal 95\% confidence intervals.}
  \label{fig:main-ratio-grid}
\end{figure}

\section{Data Analysis}
\label{sec:data-analysis}

\subsection{Data Source, Treatment Groups and Outcomes}

We conducted a retrospective cohort study using the Surveillance, Epidemiology, and End Results cancer registry linked with Medicare claims (SEER--Medicare). SEER is a population-based cancer registry program maintained by the National Cancer Institute that collects detailed information on cancer incidence, tumor characteristics, initial treatment, and survival from participating registries covering approximately 28\% of the U.S. population. Medicare is a federally administered health insurance program that provides near-universal coverage for adults aged 65 years and older in the United States and includes comprehensive claims for inpatient, outpatient, and physician services. The linkage of SEER registry data with Medicare enrollment and claims files enables longitudinal, population-based analyses of cancer treatment and outcomes among older adults \citep{warren2002seer, seermedicare2022linkage}.

We used data from the 2022 SEER--Medicare linkage update, which includes cancer diagnoses through 2021 and Medicare claims through 2022. The study cohort consisted of Medicare beneficiaries aged 66 years and older diagnosed with advanced melanoma between 2014 and 2021. To ensure complete capture of baseline comorbidities and treatment exposure, we required continuous enrollment in Medicare Parts A and B for at least 12 months prior to diagnosis and excluded individuals enrolled in Medicare Advantage plans during the baseline or follow-up period. Additional cohort eligibility criteria and variable definitions followed the approved SEER--Medicare data use agreement and analytic specifications.

We focused on patients who initiated one of three first-line systemic immunotherapy regimens within 90 days of melanoma diagnosis: ipilimumab monotherapy, nivolumab monotherapy, or the combination of nivolumab plus ipilimumab. First-line treatment was identified using Medicare procedure and drug claims, and patients whose initial regimen could not be uniquely classified into one of these three categories were excluded.

The primary outcome was overall survival, defined as the time from initiation of first-line immunotherapy to death from any cause. Patients were censored at the earliest of loss of Medicare fee-for-service coverage or the end of available claims data. Survival time was measured in months.

\subsection{Research Questions and Estimands}

This analysis was designed to replicate, in a real-world population of older adults, the primary treatment comparisons evaluated in the phase 3 CheckMate 067 randomized clinical trial \citep{wolchok2025checkmate}. In that trial, the two prespecified primary hypotheses tested whether (i) nivolumab plus ipilimumab improved overall survival compared with ipilimumab alone and whether (ii) nivolumab monotherapy improved overall survival compared with ipilimumab alone, with treatment effects summarized using Cox proportional hazards models and reported as hazard ratios. A third comparison between nivolumab plus ipilimumab and nivolumab monotherapy was conducted descriptively without formal hypothesis testing.

Accordingly, we examined the following pairwise comparisons in the SEER--Medicare cohort:
\begin{enumerate}
    \item nivolumab plus ipilimumab versus ipilimumab monotherapy;
    \item nivolumab monotherapy versus ipilimumab monotherapy; and
    \item nivolumab plus ipilimumab versus nivolumab monotherapy (descriptive analysis).
\end{enumerate}

Unlike the randomized trial, which summarized treatment effects using hazard ratios, our primary estimand was the average hazard at a prespecified time horizon $\tau$, as defined in Section~\ref{sec:causal_quantity} \footnote{Section~\ref{sec:causal_quantity} puts an emphasis on log AH. In this section, we report the AH which is exponentiated version of log AH. }. For each comparison, we estimated the ratio of average hazards.

In this way, our analysis parallels the clinical questions posed by CheckMate 067 while targeting a distinct, model-free estimand that remains well-defined under non-proportional hazards and delayed treatment effects.

\subsection{Covariate Adjustment and Estimation Details}

Because treatment assignment in SEER--Medicare is not randomized, we adjusted for baseline covariates capturing demographic, clinical, and socioeconomic characteristics that may jointly influence treatment choice and survival. These included age group at diagnosis (66--74, 75--84, $\geq$85 years), sex, race, ethnicity, SEER registry region, marital status, dual Medicare--Medicaid eligibility\footnote{Dual Medicare--Medicaid eligibility indicates concurrent enrollment in Medicare and Medicaid. Dual-eligible beneficiaries are typically low-income individuals for whom Medicaid provides supplemental coverage for Medicare premiums, cost-sharing, and certain services not covered by Medicare.}, census tract--level measures of poverty and educational attainment, Charlson comorbidity index, and the number of hospitalizations in the 90 days prior to treatment initiation. 

Treatment-specific survival curves were estimated using a cross-fitted, doubly robust procedure that combines flexible machine learning models for the outcome, treatment, and censoring mechanisms. These survival curve estimates were then mapped to arm-specific AHs and to the log AH ratio using the efficient influence function derived in Section~\ref{sec:eif-ah}. Wald-type confidence intervals were constructed using the empirical variance of the estimated influence function.

We estimated treatment effects at two prespecified time horizons, $\tau \in \{24, 36\}$ months, to balance clinical relevance with statistical stability of the survival-weighted estimand. Because the average hazard depends on $\tau$, reliable estimation requires adequate follow-up with sufficient numbers at risk near $\tau$. In the SEER--Medicare cohort, the number of individuals with observed survival time at least 24 months was 85 (nivolumab plus ipilimumab), 86 (nivolumab), and 20 (ipilimumab), whereas at 36 months the corresponding counts were 56, 56, and 15, respectively. Counts decreased further beyond 36 months. We therefore report results at both 24 and 36 months to assess the sensitivity of AH-based inference to the choice of time horizon.

\subsection{Results}
\label{subsec:data-analysis-results}

Both nivolumab plus ipilimumab and nivolumab monotherapy were associated with substantially lower average hazards of death than ipilimumab monotherapy at both prespecified horizons. At 24 months, the estimated AH ratio was 0.60 (95\% CI, 0.39--0.90) for nivolumab plus ipilimumab versus ipilimumab and 0.59 (95\% CI, 0.39--0.89) for nivolumab versus ipilimumab. Corresponding estimates at 36 months were similar, with AH ratios of 0.59 (95\% CI, 0.39--0.90) and 0.56 (95\% CI, 0.38--0.84), respectively. These results indicate a stable reduction in the average hazard for both nivolumab-containing regimens relative to ipilimumab monotherapy across time horizons.

\paragraph{Secondary comparison.}
For the descriptive comparison between nivolumab plus ipilimumab and nivolumab monotherapy, estimated AH ratios were close to the null at both horizons. The AH ratio was 1.01 (95\% CI, 0.74--1.38) at 24 months and 1.13 (95\% CI, 0.78--1.63) at 36 months. These findings suggest no clear difference in average hazard between combination therapy and nivolumab monotherapy in this population. 

\paragraph{Sensitivity to horizon.}
Across comparisons, estimates at 24 months were consistently more precise than those at 36 months, reflecting the smaller numbers of patients remaining at risk beyond three years of follow-up. Nevertheless, point estimates were qualitatively similar across horizons, indicating that conclusions regarding the relative effectiveness of nivolumab-containing regimens versus ipilimumab monotherapy were robust to the choice of time horizon within this range. These findings are consistent with the expected trade-off between clinical relevance and statistical stability when estimating survival-weighted causal estimands at longer horizons.

Figure~\ref{fig:seer-ah} summarizes the estimated log average-hazard ratios and confidence intervals across horizons, analysis settings, and treatment comparisons.

\begin{figure}[t]
  \centering
  \includegraphics[width=0.9\textwidth]{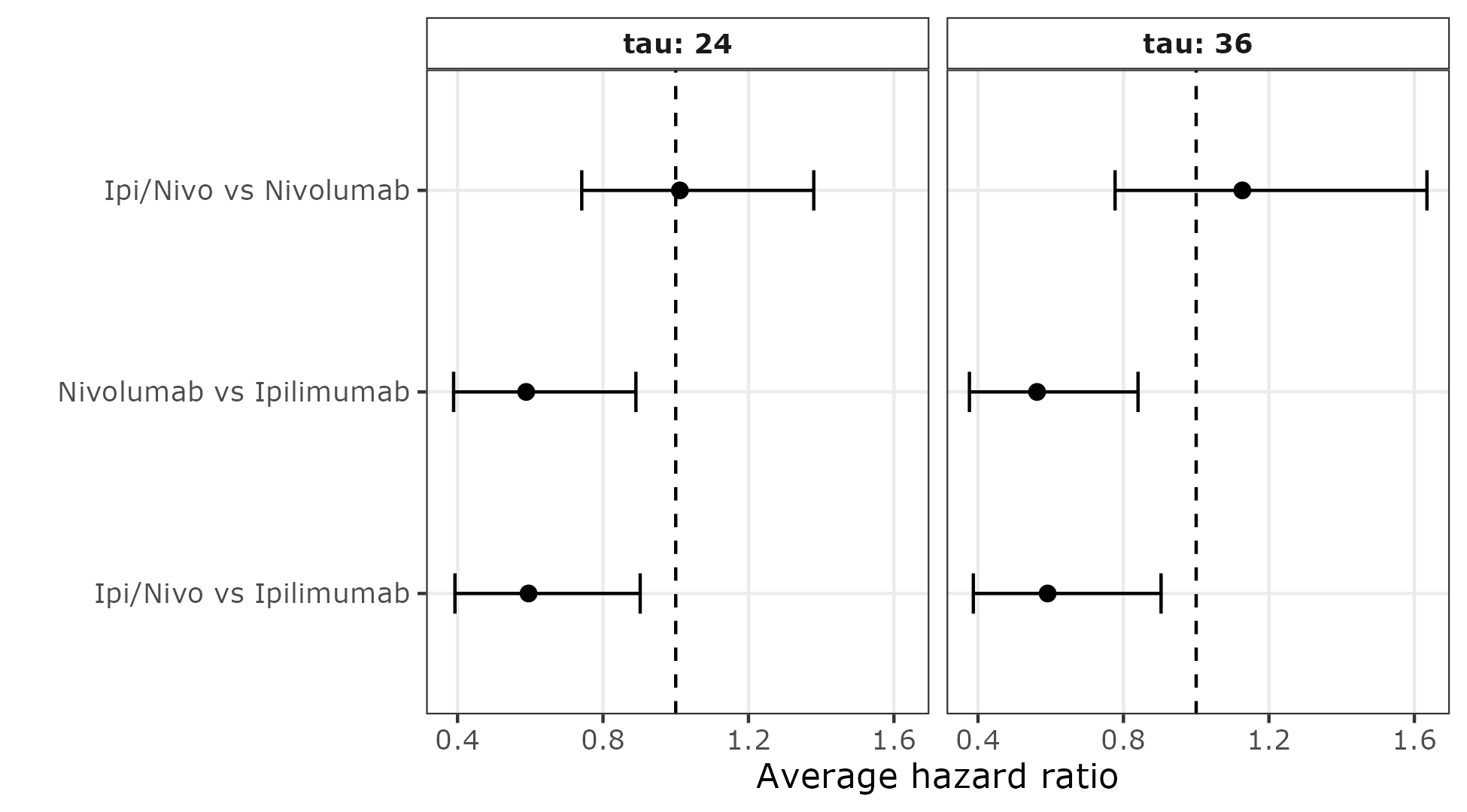}
  \caption{Estimated average hazard (AH) ratios  with 95\% Wald confidence intervals for comparisons of nivolumab plus ipilimumab, nivolumab monotherapy, and ipilimumab monotherapy at 24- and 36-month horizons. The vertical dashed line indicates no difference in average hazard.}
  \label{fig:seer-ah}
\end{figure}

\section{Conclusion}

We studied semiparametric estimation and inference for the average hazard, a rate-based causal estimand that remains well-defined under non-proportional hazards. By representing the estimand as a smooth functional of the marginal survival curve, we derived the efficient influence function and constructed cross-fitted debiased estimators with doubly robust remainder structure.

First, we formalized the AH as a causal estimand within the potential outcomes framework. We demonstrated that defining the AH via marginal survival curves—representing a population-level person-time event rate—is the appropriate approach for causal inference. This avoids the non-collapsibility issues inherent in simply averaging stratum-specific incidence rates, which can misrepresent the true population-level treatment effect.

Second, we established the pathwise differentiability of the AH and derived its efficient influence function (EIF). By viewing the AH as a smooth functional of the survival path, we utilized the functional delta method to provide a foundation for efficient, nonparametric inference. This allows for the construction of cross-fitted, doubly robust estimators that can leverage high-dimensional nuisance parameter estimation via machine learning while maintaining consistency and asymptotic normality.

Our simulation results confirm the practical advantages of this approach. The proposed AH-DML estimator maintains near-nominal coverage and minimal bias across diverse settings, including scenarios with crossing hazards where traditional Cox-based hazard ratios and simple G-computation fail to provide reliable or interpretable results. Furthermore, our application to SEER-Medicare data demonstrates the utility of the AH in real-world comparative effectiveness research. The AH ratios for immunotherapy regimens remained stable across different time horizons, providing a robust and clinically intuitive summary of treatment effects in an older adult population.

There are several promising avenues for future research. While our framework focuses on binary treatments and right-censored data, extending the EIF derivations to accommodate multiple treatment arms, continuous exposures, or competing risks would further broaden the applicability of the AH. Additionally, investigating the performance of targeted maximum likelihood estimation (TMLE) specifically tailored to the AH functional could offer further gains in finite-sample robustness. By integrating modern semiparametric theory with flexible machine learning, the average hazard provides a principled and powerful tool for the analysis of complex time-to-event data.

\section{Acknowledgement}
This research was supported by the National Institute of General Medical Sciences of the National Institutes of Health under award number R01 GM152499, the National Heart Lung and Blood Institute of the National Institutes of Health under award number R01 HL089778, and the McGraw/Patterson Research Fund. 

We thank Morgan A. Paul, MB, for her invaluable assistance in preparing the SEER–Medicare dataset for the analysis. We also acknowledge the expertise and guidance of Angela C. Tramontano regarding the SEER-Medicare data access and structure. 

The collection of cancer incidence data used in this study was supported by the California Department of Public Health pursuant to California Health and Safety Code Section 103885; Centers for Disease Control and Prevention’s (CDC) National Program of Cancer Registries, under cooperative agreement NU58DP007156; the National Cancer Institute’s Surveillance, Epidemiology and End Results Program under contract HHSN261201800032I awarded to the University of California, San Francisco, contract HHSN261201800015I awarded to the University of Southern California, and contract HHSN261201800009I awarded to the Public Health Institute. The ideas and opinions expressed herein are those of the author(s) and do not necessarily reflect the opinions of the State of California, Department of Public Health, the National Cancer Institute, and the Centers for Disease Control and Prevention or their Contractors and Subcontractors.

\bibliographystyle{apalike}
\bibliography{refs}

\appendix

\section{Proofs}
\label{app:proofs}

\subsection{Proof of Theorem~\ref{thm:identification} and Proposition~\ref{prop:identification-censoring}}
\label{app:proof-identification}

We prove that under assumptions (A1)--(A5), the causal AH
\[
\eta_a(\tau) = \frac{\Pr\{T(a) \le \tau\}}{\mathbb{E}[\min\{T(a),\tau\}]}
\]
is identified from the observed data distribution by
\[
\eta_a(\tau) = \frac{1 - S_a(\tau)}{\int_0^\tau S_a(t)\,dt},
\qquad
S_a(t) = \mathbb{E}\big[ S_0(t \mid a,W) \big].
\]

\paragraph{Step 1. Identification of survival under intervention.}
By consistency (A1), if $A=a$, then $T=T(a)$. By conditional exchangeability (A2),
\[
\Pr\{T(a) > t\}
= \mathbb{E}\big[ \Pr\{T(a) > t \mid W\} \big]
= \mathbb{E}\big[ \Pr\{T > t \mid A=a,W\} \big].
\]
The last expression equals $S_a(t) = \mathbb{E}[S_0(t \mid a,W)]$, which is a functional of the observed data distribution.

\paragraph{Step 2. Identification of the numerator.}
The counterfactual cumulative incidence at horizon $\tau$ is
\[
F_a(\tau) = \Pr\{T(a) \le \tau\} = 1 - \Pr\{T(a) > \tau\} = 1 - S_a(\tau).
\]

\paragraph{Step 3. Identification of the denominator.}
Note that
\[
R_a(\tau) = \mathbb{E}[\min\{T(a),\tau\}]
= \int_0^\tau \Pr\{T(a) > t\}\,dt.
\]
By Step 1, $\Pr\{T(a) > t\} = S_a(t)$, hence
\[
R_a(\tau) = \int_0^\tau S_a(t)\,dt.
\]

\paragraph{Step 4. Ratio representation.}
Combining Steps 2 and 3, we obtain
\[
\eta_a(\tau) = \frac{F_a(\tau)}{R_a(\tau)}
= \frac{1 - S_a(\tau)}{\int_0^\tau S_a(t)\,dt}.
\]

\paragraph{Step 5. Mapping to the observed-data law under censoring.}
Under (A4)--(A5), the conditional survival $S_0(t\mid a,W)$ is identified from the observed-data distribution of
$(U,\Delta)$ given $(A=a,W)$ via the standard product-integral (equivalently, conditional hazard) representation.
Assumption (A3) ensures $\pi_a(W)$ is bounded away from 0 and 1, and (A5) ensures the relevant censoring
survival probabilities are strictly positive on $[0,\tau]$, so the identified functionals are well-defined.

Thus, the causal AH is identified as claimed.
\qed

\subsection{Proof of Theorem~\ref{thm:eif-ah}}
\label{appendix:eif-ah}

We briefly recall the notions of pathwise differentiability and Hadamard
differentiability used in the proof; see \citet[][Chs.~20, 25]{vaart1998asymptotic}
and \citet[][Ch.~3]{bickel1993efficient}.

\paragraph{Pathwise differentiability.}
Let $\mathcal{P}$ be a (possibly infinite-dimensional) model for the data law $P$,
and let $\Psi:\mathcal{P}\to\mathcal{H}$ be a parameter taking values in a Banach
space $\mathcal{H}$ (for us, typically $\mathcal{H}=\ell^\infty([0,\tau])$).
We say that $\Psi$ is \emph{pathwise differentiable} at $P_0\in\mathcal{P}$ if, for
every regular one-dimensional submodel
$\{P_\varepsilon:\varepsilon\in(-1,1)\}\subset\mathcal{P}$ passing through $P_0$
with score $s\in L_2^0(P_0)$, the map $\varepsilon\mapsto\Psi(P_\varepsilon)$ is
differentiable at $\varepsilon=0$ and there exists
$\phi_\Psi\in L_2^0(P_0;\mathcal{H})$ such that
\[
\frac{\mathrm{d}}{\mathrm{d}\varepsilon}\Big|_{\varepsilon=0}\Psi(P_\varepsilon)
=
\mathbb{E}_{P_0}\!\big[\phi_\Psi(O)\,s(O)\big].
\]
The element $\phi_\Psi$ is the (canonical) efficient influence function of $\Psi$ at $P_0$.

\paragraph{Hadamard differentiability.}
Let $\Phi:\mathcal{D}\subset\mathcal{H}\to\mathbb{R}$ be a functional defined on a
subset $\mathcal{D}$ of a Banach space $\mathcal{H}$.  We say that $\Phi$ is
Hadamard differentiable at $\eta\in\mathcal{D}$ tangentially to a subspace
$\mathcal{D}_0\subset\mathcal{H}$ if there exists a continuous linear map
$\dot\Phi_\eta:\mathcal{D}_0\to\mathbb{R}$ such that, for every sequence
$h_n\to h$ in $\mathcal{D}_0$ and every $\varepsilon_n\downarrow0$ with
$\eta+\varepsilon_n h_n\in\mathcal{D}$,
\[
\frac{\Phi(\eta+\varepsilon_n h_n)-\Phi(\eta)}{\varepsilon_n}
\;\longrightarrow\;
\dot\Phi_\eta[h].
\]
Hadamard differentiability is the standard regularity condition supporting the
functional delta method in Banach spaces
\citep[][Defs.~20.1–20.2, Thm~20.8]{vaart1998asymptotic}.

\paragraph{Functional delta method and influence-function chain rule.}
Let $\Psi:\mathcal{P}\to\mathcal{H}$ be pathwise differentiable at $P_0$ with
efficient influence function $\phi_\Psi$, and let $\Phi:\mathcal{H}\to\mathbb{R}$
be Hadamard differentiable at $\Psi(P_0)$ tangentially to the range of the tangent
space of $\Psi$.  Consider the composition
\[
\Theta(P) \;:=\; \Phi\big(\Psi(P)\big).
\]
Then $\Theta$ is pathwise differentiable at $P_0$, and the functional delta method
\citep[][Thm~20.8]{vaart1998asymptotic}, as formulated in influence-function
language in \citet[][Sec.~3.3]{bickel1993efficient}, yields
\begin{equation}
\label{eq:if-chain-rule-app}
\phi^*_\Theta(O)
\;=\;
\dot\Phi_{\Psi(P_0)}\!\big[\phi_\Psi(O)\big],
\end{equation}
that is, the EIF of $\Theta$ is the image of the EIF of $\Psi$ under the
Hadamard derivative of $\Phi$.

\medskip

We now apply this general result to the log average hazard.

\begin{proof}[Proof of Theorem~\ref{thm:eif-ah}]
Fix an arm $a\in\{0,1\}$ and write
\[
S_a(t)=\mathbb{E}\big[S_0(t\mid a,W)\big],\qquad
F_a(\tau)=1-S_a(\tau),\qquad
R_a(\tau)=\int_0^\tau S_a(u)\,du.
\]
Define the functional $\Phi$ on a suitable subset
$\mathcal{D}\subset L^\infty([0,\tau])$ by
\[
\Phi(S)
\;=\;
\log\!\left(\frac{1-S(\tau)}{\int_0^\tau S(u)\,du}\right)
= \log\!\big(F(\tau)/R(\tau)\big),
\]
where $F(\tau)=1-S(\tau)$ and $R(\tau)=\int_0^\tau S(u)\,du$. We view
$\theta_a$ as the composition
\[
\theta_a \;=\; \Theta(P_0) := \Phi\big(\Psi(P_0)\big),
\qquad \Psi(P)(t) = S_a(t).
\]

\medskip\noindent
\textbf{Step 1: Hadamard derivative of $\Phi$.}
Let $S\in\mathcal{D}$ be such that $F(\tau)>0$ and $R(\tau)>0$, and let
$h\in L^\infty([0,\tau])$ be a bounded perturbation. For small $\varepsilon$,
set $S_\varepsilon=S+\varepsilon h$ and corresponding
\[
F_\varepsilon(\tau)=1-S_\varepsilon(\tau),
\qquad
R_\varepsilon(\tau)=\int_0^\tau S_\varepsilon(u)\,du.
\]
Differentiating gives
\[
F'(\tau)[h]
=\frac{\mathrm{d}}{\mathrm{d}\varepsilon}\Big|_{0} F_\varepsilon(\tau)
=-h(\tau),
\qquad
R'(\tau)[h]
=\frac{\mathrm{d}}{\mathrm{d}\varepsilon}\Big|_{0} R_\varepsilon(\tau)
=\int_0^\tau h(u)\,du.
\]
Using the chain rule for the smooth map $(a,b)\mapsto \log(a/b)$ we obtain
\[
\dot\Phi_S[h]
=
\frac{F'(\tau)[h]}{F(\tau)}-\frac{R'(\tau)[h]}{R(\tau)}
=
-\frac{h(\tau)}{F(\tau)}-\frac{1}{R(\tau)}\int_0^\tau h(u)\,du.
\]
The map $h\mapsto\dot\Phi_S[h]$ is linear and continuous on
$L^\infty([0,\tau])$, so $\Phi$ is Hadamard differentiable at $S$
tangentially to $L^\infty([0,\tau])$ whenever $F(\tau),R(\tau)>0$;
this is an instance of \citet[][Thm~20.8]{vaart1998asymptotic}.

\medskip\noindent
\textbf{Step 2: Pathwise differentiability of $S_a(\cdot)$.}
By Proposition~\ref{prop:eif-mu} (adapted from \citealp{westling2024inference}),
under (A1)--(A5) the map $\Psi:\mathcal{P}\to\ell^\infty([0,\tau])$ given by
$\Psi(P)(t)=S_a(t)$ is pathwise differentiable at $P_0$ in the nonparametric
model, with efficient influence function process $t\mapsto\phi^*_{S_a(t)}(O)$.

\medskip\noindent
\textbf{Step 3: Chain rule for the log AH parameter.}
Consider the scalar parameter
\[
\Theta(P) := \Phi\big(\Psi(P)\big)
= \log\!\left(
\frac{1-\Psi(P)(\tau)}
     {\int_0^\tau \Psi(P)(u)\,du}
\right),
\]
so that $\Theta(P_0)=\theta_a$.
Since $\Psi$ is pathwise differentiable at $P_0$ and $\Phi$ is Hadamard
differentiable at $\Psi(P_0)=S_a$, the composition $\Theta$ is pathwise
differentiable at $P_0$ and its EIF is given by the chain rule
\eqref{eq:if-chain-rule-app}:
\[
\phi^*_{\theta_a}(O)
=
\dot\Phi_{S_a}\!\big[\phi^*_{S_a(\cdot)}(O)\big].
\]
Substituting the expression for $\dot\Phi_{S_a}$ obtained in Step~1 with
$h(\cdot)=\phi^*_{S_a(\cdot)}(O)$ yields
\[
\phi^*_{\theta_a}(O)
= -\,\frac{1}{F_a(\tau)}\,\phi^*_{S_a(\tau)}(O)
\;-\;\frac{1}{R_a(\tau)}\int_0^\tau \phi^*_{S_a(u)}(O)\,du,
\]
which is exactly \eqref{eq:eif-ah-from-mu}. This completes the proof.
\end{proof}

\subsection{Proof of Theorem~\ref{thm:consistency}}
\label{app:proof-consistency}

\begin{proof}
Fix $a\in\{0,1\}$. The cross-fitted DML estimator for the marginal survival curve is defined as:
\[
\hat S_a(t) = \frac{1}{n} \sum_{k=1}^K \sum_{i \in \mathcal{I}_k} \phi_{t,a}(O_i, \hat{\nu}^{(-k)}),
\]
where $\hat{\nu}^{(-k)} = (\hat{S}^{(-k)}, \hat{G}^{(-k)}, \hat{\pi}^{(-k)})$ is the set of nuisance parameters estimated from the training data excluding fold $k$, and $\phi_{t,a}$ is the influence function defined in Proposition~\ref{prop:eif-mu}.

We decompose the error $\hat S_a(t)-S_a(t)$ into an empirical-process term and a bias (drift) term:
\begin{equation}
\label{eq:AL_decomposition}
\hat S_a(t)-S_a(t)
=
(\mathbb{P}_n-\mathbb{P}_0)\phi_{t,a}(O, \hat{\nu})
\;+\;
\Big\{P_0 \phi_{t,a}(O, \hat{\nu}) - S_a(t)\Big\},
\end{equation}
where $P_0 \phi_{t,a}(O, \hat{\nu})$ denotes the expectation over the data $O$ holding the nuisance estimates $\hat{\nu}$ fixed (standard under the cross-fitting paradigm).

Under cross-fitting, for each fold $k$ the function class
$\{\phi_{t,a}(\cdot,\hat\nu^{(-k)}): t\in[0,\tau]\}$
is evaluated on observations independent of the training sample used to construct $\hat\nu^{(-k)}$.
Combined with the boundedness/positivity in (C1) and the convergence in (C2), standard arguments for cross-fitted
one-step survival estimators yield
\[
\sup_{t\in[0,\tau]}\left|(\mathbb{P}_n-\mathbb{P}_0)\phi_{t,a}(O,\hat\nu)\right|=o_p(1),
\]
see, e.g., \citet[][proof of Thm.~2]{westling2024inference} for a detailed treatment of the uniform control.

For the bias term, we utilize the drift representation for survival functionals \citep{westling2024inference}:
\begin{equation}
\label{eq:drift_app}
\begin{split}
P_0\phi_{t,a}(O, \hat{\nu}) - S_a(t)
= \mathbb{E}\Bigg[
\hat S(t\mid a,W) \int_0^{t} 
&\frac{S_0(u^{-}\mid a,W)}{\hat S(u\mid a,W)}
\left\{\frac{\pi_0(a\mid W)\,G_0(u\mid a,W)}{\hat \pi(a\mid W)\,\hat G(u\mid a,W)}-1\right\} \\
&\times d\!\left\{\hat \Lambda(u\mid a,W)-\Lambda_0(u\mid a,W)\right\}
\Bigg].
\end{split}
\end{equation}

By Neyman orthogonality, this term consists of products of nuisance estimation errors. Specifically, the drift vanishes if either $\hat S$ is consistent for $S_0$ (making the $d\{\hat \Lambda - \Lambda_0\}$ term zero in the limit) or if the pair $(\hat \pi, \hat G)$ is consistent for $(\pi_0, G_0)$ (making the ratio term zero). These correspond exactly to the double-robustness conditions (i) and (ii) in Theorem~\ref{thm:consistency}. Given these conditions, we have $\|\hat S_a-S_a\|_{\infty,[0,\tau]} \xrightarrow{P} 0$.

To establish consistency for the final estimand, we invoke the Continuous Mapping Theorem. Let $\Phi: \ell^\infty([0, \tau]) \to \mathbb{R}^2$ be the map $S \mapsto (1-S(\tau), \int_0^\tau S(u)du)$. Since $\Phi$ is continuous with respect to the supremum norm, $(\hat F_a(\tau), \hat R_a(\tau)) \xrightarrow{P} (F_a(\tau), R_a(\tau))$. Furthermore, the ratio map $g(x, y) = x/y$ is continuous on the domain where $y \ge \epsilon > 0$, which is guaranteed by (C1). Thus, $\hat\eta_a(\tau) \xrightarrow{P} \eta_a(\tau)$. Finally, the log-transformation $\theta(\tau) = \log(\eta_1(\tau)) - \log(\eta_0(\tau))$ is continuous on $(0, \infty)$, and (C1) ensures $F_a(\tau)$ is bounded away from zero, completing the proof.
\end{proof}

\subsection{Proof of Theorem~\ref{thm:clt}}
\label{app:proof-AN}

By condition (C3), for each arm $a$ we have the uniform asymptotic linear expansion
\begin{equation}
\label{eq:surv_AL_app}
\sup_{t\in[0,\tau]}
\left|
\hat S_a(t)-S_a(t)-\mathbb{P}_n\!\left[\phi^*_{S_a(t)}(O)\right]
\right|
=o_p(n^{-1/2}),
\end{equation}
where $\phi^*_{S_a(t)}$ is the efficient influence function in Proposition~\ref{prop:eif-mu}.

Define the map $\Gamma:\ell^\infty([0,\tau])^2\to\mathbb{R}$ by
\[
\Gamma(f_0,f_1)
=
\log\!\left(\frac{1-f_1(\tau)}{\int_0^\tau f_1(u)\,du}\right)
-\log\!\left(\frac{1-f_0(\tau)}{\int_0^\tau f_0(u)\,du}\right),
\]
so that $\theta(\tau)=\Gamma(S_0(\cdot),S_1(\cdot))$ and $\hat\theta(\tau)=\Gamma(\hat S_0(\cdot),\hat S_1(\cdot))$.

By (C1), both $F_a(\tau)$ and $R_a(\tau)$ are bounded away from zero, and the one-dimensional map
$f\mapsto \log((1-f(\tau))/\int_0^\tau f)$ is Hadamard differentiable at $S_a$ (as shown in Appendix~\ref{appendix:eif-ah}).
Hence $\Gamma$ is Hadamard differentiable at $(S_0,S_1)$, and the functional delta method applied to
\eqref{eq:surv_AL_app} (jointly in $a=0,1$) gives
\[
n^{1/2}\{\hat\theta(\tau)-\theta(\tau)\}
=
\frac{1}{\sqrt{n}}\sum_{i=1}^n \phi^*_{\theta}(O_i) + o_p(1),
\]
where $\phi^*_{\theta}=\phi^*_{\theta_1}-\phi^*_{\theta_0}$ and $\phi^*_{\theta_a}$ is the mapped EIF in
\eqref{eq:eif-ah-from-mu} (Theorem~\ref{thm:eif-ah}). The CLT follows from the classical i.i.d.\ central limit theorem
since $\phi^*_{\theta}\in L_2^0(P_0)$.

Finally, because $\widehat\phi^{\,*}_{\theta}$ is a cross-fitted plug-in estimator of $\phi^*_{\theta}$,
$\hat\sigma^2_\theta=\mathbb{P}_n[(\widehat\phi^{\,*}_{\theta}-\mathbb{P}_n\widehat\phi^{\,*}_{\theta})^2]$ is consistent for
$\mathrm{Var}\{\phi^*_{\theta}(O)\}$, and thus $\widehat{\mathrm{Var}}\{\hat\theta(\tau)\}=\hat\sigma^2_\theta/n$ is consistent.
\qed

\section{Simulation data-generating mechanisms}
\label{app:sim-dgp}

This appendix describes the data-generating mechanisms (DGMs) used in the simulation study of Section~\ref{sec:simulation}. All DGMs share the same baseline covariates and treatment mechanism; they differ only in the conditional event-time distribution and the censoring model.

\subsection{Baseline covariates and treatment}

For each subject we generate a three-dimensional baseline covariate vector
$W = (W_1,W_2,W_3)$ representing age, BMI, and a risk score. Let $\mathrm{Beta}(a,b)$ denote a beta distribution on $(0,1)$. We set
\begin{align*}
W_1 &\sim 20 + 60 \cdot \mathrm{Beta}(1.1,1.1), \\
W_3 \mid W_1 &= 10 \cdot \mathrm{Beta}\!\Bigl(1.5 + \frac{|W_1 - 50|}{20},\,3\Bigr),\\
W_2 \mid W_1 &= 18 + 32 \cdot \mathrm{Beta}\!\Bigl(1.5 + \frac{W_1}{20},\,6\Bigr).
\end{align*}
This yields ages approximately in $[20,80]$, BMI positively associated with age, and risk scores highest among very young and very old individuals.

Treatment $A \in \{0,1\}$ is assigned via the nonlinear propensity score
\[
\Pr(A = 1 \mid W = w)
= \mathrm{expit}\!\left( -1 + \log\Bigl\{1 + \exp\bigl(-20 + w_1 / 10\bigr) + \exp\bigl(-3 + w_3 / 2\bigr)\Bigr\} \right),
\]
and we draw $A \mid W \sim \mathrm{Bernoulli}\{\Pr(A=1 \mid W)\}$. This design ensures positivity while deliberately misspecifying simple parametric logistic models.

In all DGMs we define the observed follow-up time and event indicator as
\[
U = \min\{T(A), C\}, \qquad \Delta = \mathbb{I}\{T(A) \le C\},
\]
and apply administrative truncation at $24$ months.

\subsection{Proportional hazards with complex censoring (PH-complex)}
\label{app:sim-ph-complex-dgp}

The PH-complex DGP is based on a Weibull model on the hazard scale with Cox-type linear predictors in $(A,W)$, adapted from \citet{westling2024inference}. Let $T$ denote event time and $C$ censoring time. Conditional on $(A,W)$:
\begin{align*}
h_T(t \mid A = a, W = w)
&= t^{\alpha - 1} \exp\{\eta_T(a, w)\},\\
h_C(t \mid A = a, W = w)
&= t^{\alpha - 1} \exp\{\eta_C(a, w)\},
\end{align*}
where $\alpha > 0$ is a shape parameter (we use $\alpha = 1.5$) and
\begin{align*}
\eta_T(a, w)
&= S_0
+ (S_1 - S_0)\,a
+ 0.2 \frac{w_1 - 50}{10}
- 0.15 a \frac{w_1 - 50}{10}
+ 0.2 (w_3 - 5)
- 0.15 a (w_3 - 5) \\
&\quad
+ 0.1 (w_3 - 5)\frac{w_1 - 50}{10}
+ 0.05 (w_2 - 30)
- 0.025 a (w_2 - 30),
\\[0.5em]
\eta_C(a, w)
&= G_0
- 0.4 a
+ 0.2 \frac{w_1 - 50}{10}
- 0.15 a \frac{w_1 - 50}{10}
+ 0.2 (w_3 - 5)
- 0.15 a (w_3 - 5) \\
&\quad
+ 0.1 (w_3 - 5)\frac{w_1 - 50}{10}
+ 0.05 (w_2 - 30)
- 0.025 a (w_2 - 30),
\end{align*}
with $S_0=-5.6$, $S_1=-5.96$, $G_0=-4.7$.

In this DGP, the hazard ratio comparing treatment $A=1$ to control $A=0$ is
\[
\frac{h_T(t \mid 1, w)}{h_T(t \mid 0, w)}
= \exp\{\eta_T(1,w) - \eta_T(0,w)\},
\]
which does not depend on time $t$, so proportional hazards holds for fixed $W$.

\paragraph{Why ``PH-complex''?}
Both the event and censoring predictors include interaction structure that is not consistently estimable by our current nuisance library, so double robustness need not protect the AH-DML estimator in this regime.

\subsection{Proportional hazards with linear exponential censoring (PH)}
\label{app:sim-ph-linear-dgp}

We also consider a PH variant that keeps the same Weibull PH event-time model as PH-complex,
\[
h_T(t \mid A=a, W=w) = t^{\alpha-1}\exp\{\eta_T(a,w)\},
\]
with $\alpha=1.5$, $S_0=-5.6$, and $S_1=-5.96$, but uses a simpler exponential censoring distribution with a linear log-rate in $(A,W)$:
\[
C \mid A=a, W=w \sim \mathrm{Exp}\bigl(\exp\{\eta_C^{\text{lin}}(a,w)\}\bigr),
\]
where
\[
\eta_C^{\text{lin}}(a,w)
= \beta_{0,c}
+ \beta_{A,c}\,a
+ \beta_{1,c}\frac{w_1-50}{10}
+ \beta_{2,c}\frac{w_2-30}{10}
+ \beta_{3,c}(w_3-5).
\]
We fix
\[
(\beta_{A,c},\beta_{1,c},\beta_{2,c},\beta_{3,c}) = (0.3,\,0.20,\,0.05,\,0.20),
\qquad \beta_{0,c}=-3.8,
\]
yielding approximately
$E[P(C \le 12 \mid A=0,W)] \approx 0.20$ and
$E[P(T \le C \mid A=0,W)] \approx 0.15$.
Censoring is administratively truncated at 24 months.

\paragraph{Treatment mechanism.}
Treatment assignment uses the same nonlinear propensity score as above.

\subsection{Scenario: non-proportional hazards (non-PH) warp DGM}
\label{app:sim-nonph-dgp}

For the warp-based DGMs, define a baseline exponential rate
\[
\lambda_0(w)
= \exp\!\left\{ \beta_0
  - \frac{|w_1 - 60|}{10}
  - 2 \log (w_2)
  + \frac{w_3}{2} \right\}, \qquad \beta_0 = 1.0,
\]
so that under control $T(0) \mid W=w \sim \mathrm{Exp}\{\lambda_0(w)\}$, together with covariate-dependent warp parameters
\begin{align*}
\gamma(w)
&= \mathrm{expit}\!\Biggl(
 \beta_\gamma
 + \frac{1}{2} \log\bigl(1 + e^{(w_1 - 55)/5}\bigr)
 + \frac{1}{4} \log\bigl(1 + e^{(w_2 - 30)/3}\bigr)
 \Biggr),
 \qquad \beta_\gamma = -1.64,\\[0.3em]
\iota(w)
&= \exp\!\Biggl(
 2
 - \frac{1}{2} \log\bigl(1 + e^{(w_1 - 55)/5}\bigr)
 - \frac{1}{10} \log\bigl(1 + e^{(w_2 - 30)/3}\bigr)
 \Biggr),
\end{align*}
where $\gamma(w)$ controls the maximal hazard reduction under treatment and $\iota(w)$ its duration.

In the non-PH warp DGM, control event times satisfy
\[
T(0) \mid W=w \sim \mathrm{Exp}\{\lambda_0(w)\},
\qquad
S_0(t \mid A=0,W=w) = \exp\{-\lambda_0(w)t\}.
\]
For treated subjects we define a covariate-dependent time warp $\phi(t,w)$ and set
\[
S(t \mid A=1,W=w) = S_0\bigl(\phi(t,w) \mid W=w\bigr)
= \exp\{-\lambda_0(w)\,\phi(t,w)\}.
\]
We take
\[
\phi'_{\text{nonPH}}(t,w) =
\begin{cases}
1 - \dfrac{t}{r}\bigl(1 - \gamma(w)\bigr), & 0 \le t \le r,\\[0.5em]
\gamma(w), & r < t \le r + \iota(w),\\[0.5em]
1 - \dfrac{(1 - \gamma(w))\{r + \iota(w)\}^2}{t^2}, & t > r + \iota(w),
\end{cases}
\]
with ramp-up length $r = 1.5$ months. Censoring times are generated from an exponential distribution with covariate- and treatment-dependent rate
\[
C \mid (A=a,W=w) \sim \mathrm{Exp}\Bigl(\exp\{\beta_1 + 0.3 a + \log(1 + \exp{(30 - w_1)/3}) + ((w_3-5)/4)^2\}\Bigr),
\]
with $\beta_1 = -4.9$, and administratively truncated at 24 months.

\subsection{Scenario: crossing hazards (cross-A) DGM}
\label{app:sim-cross-dgp}

The crossing-hazards DGM modifies the warp to produce early benefit and late harm. We again set $T(0) \mid W=w \sim \mathrm{Exp}\{\lambda_0(w)\}$ and define the treated hazard via
\[
h_{\text{crossA}}(t \mid A=1,w) = \lambda_0(w)\,\phi'_{\text{crossA}}(t,w),
\]
where
\[
\phi'_{\text{crossA}}(t,w)
=
\begin{cases}
1 - \gamma(w), & 0 \le t \le r,\\[0.5em]
\gamma(w),     & r < t \le r + \iota(w),\\[0.5em]
1 + \varepsilon\bigl(1 - \gamma(w)\bigr), & t > r + \iota(w),
\end{cases}
\]
with $r = 1.5$ months and $\varepsilon = 0.5$.
Censoring is generated as in the non-PH warp DGM.

\section{Simulation estimators and implementation}
\label{app:sim-methods}

This appendix provides additional details on the three estimators compared in the simulation study of Section~\ref{sec:simulation} and on the implementation of the Monte Carlo experiment.

\subsection{Estimator specifications}

All estimators target the treatment-specific AH,
\[
\eta_a(\tau) = \frac{F_a(\tau)}{R_a(\tau)}
= \frac{1 - S_a(\tau)}{\int_0^\tau S_a(t)\,dt}, \qquad a \in \{0,1\},
\]
and the log AH ratio $\theta(\tau) = \log\{\eta_1(\tau)/\eta_0(\tau)\}$ at $\tau = 12$ months, where $S_a(t)$ denotes the covariate-adjusted survival curve under treatment~$a$.

\paragraph{\texttt{AH-DML}.}
The estimator \texttt{AH-DML} implements the cross-fitted, doubly robust procedure of Section~\ref{sec:cf-ah} using the \texttt{CFsurvival} package. For each arm $a$, we estimate the marginal survival path $t \mapsto S_a(t)$ via a one-step estimator based on the efficient influence function in Proposition~\ref{prop:eif-mu}, using $K=5$-fold cross-fitting.

Following \citet{westling2024inference}, who highlight the importance of combining parametric, semiparametric, and nonparametric survival learners for nuisance estimation, we use a broad Super Learner library. Parametric accelerated failure time models (exponential, Weibull, log--logistic) supply low-variance baselines. Cox proportional hazards models—with and without treatment--covariate interactions—contribute semiparametric structure. Random survival forests provide a fully nonparametric option with established large-sample properties under right censoring \citep{ishwaran2008rsf, ishwaran2010consistency}. The propensity score is estimated using a flexible Super Learner including generalized linear models and multivariate adaptive regression splines.

Given estimated marginal survival curves $\hat S_a(t)$, we compute
$\hat F_a(\tau) = 1 - \hat S_a(\tau)$,
$\hat R_a(\tau) = \int_0^\tau \hat S_a(t)\,dt$ (via numerical integration),
and $\hat\eta_a(\tau)=\hat F_a(\tau)/\hat R_a(\tau)$, with
$\hat\theta(\tau)=\log\{\hat\eta_1(\tau)/\hat\eta_0(\tau)\}$.
Variance estimates use the empirical variance of the mapped efficient influence function from Theorem~\ref{thm:eif-ah}.

\paragraph{\texttt{G-computation}.}
The \texttt{G-computation} estimator is an outcome-regression plug-in method. A survival Super Learner is used to estimate conditional survival functions $S_a(t \mid W)$; marginal survival is obtained by averaging:
\[
\hat S_a(t)=\mathbb{P}_n\{\hat S_a(t\mid W)\},\qquad a\in\{0,1\}.
\]
These are then mapped to AH and the log AH ratio as above. Inference uses a nonparametric bootstrap with 200 resamples.

\paragraph{\texttt{Cox}.}
Finally, \texttt{Cox} fits a conventional Cox proportional hazards working model,
\[
\lambda(t\mid A,W)=\lambda_0(t)\exp\{\beta_A A+\beta_W^\top W\},
\]
using the \texttt{survival} package. Predicted survival curves are generated under $A=0$ and $A=1$ while holding covariates at their observed values. Marginal survival is obtained by averaging predictions over the empirical distribution of~$W$. AH and the log AH ratio are computed by the same plug-in map. Variance estimation uses a nonparametric bootstrap with $200$ resamples.

\subsection{Simulation workflow and computation}

The simulation grid consists of five sample sizes
($n = 500, 750, 1000, 1250, 1500$), four DGPs (PH, PH-complex, non-PH, and cross-A), and the three estimators described above. For each Monte Carlo replicate $m = 1,\dots,1000$ and each sample size $n$ we:
\begin{enumerate}
    \item generate one dataset under each DGP using Appendix~\ref{app:sim-dgp};
    \item apply all three estimators to obtain estimates and confidence intervals for $\eta_0(\tau)$, $\eta_1(\tau)$, and $\theta(\tau)$;
    \item save the results along with metadata (iteration, sample size, DGP, estimator).
\end{enumerate}

\section{Additional simulation results}
\label{app:sim-extra}

\subsection{Non-proportional hazards and crossing-hazards scenarios}
\label{app:sim-crossing}

Figures~\ref{fig:app-nonph-bvm}--\ref{fig:app-cross-coverage} report the full Monte Carlo panels for the non-PH and cross-A DGMs, including arm-specific average hazards and the log AH ratio. Across both DGMs, \texttt{AH-DML} exhibits small bias and near-nominal coverage, while \texttt{Cox} and \texttt{G-computation} can display substantial bias and under-coverage consistent with working-model misspecification.

\begin{figure}[t]
  \centering
  \begin{subfigure}{0.48\textwidth}
    \centering
    \includegraphics[width=\linewidth]{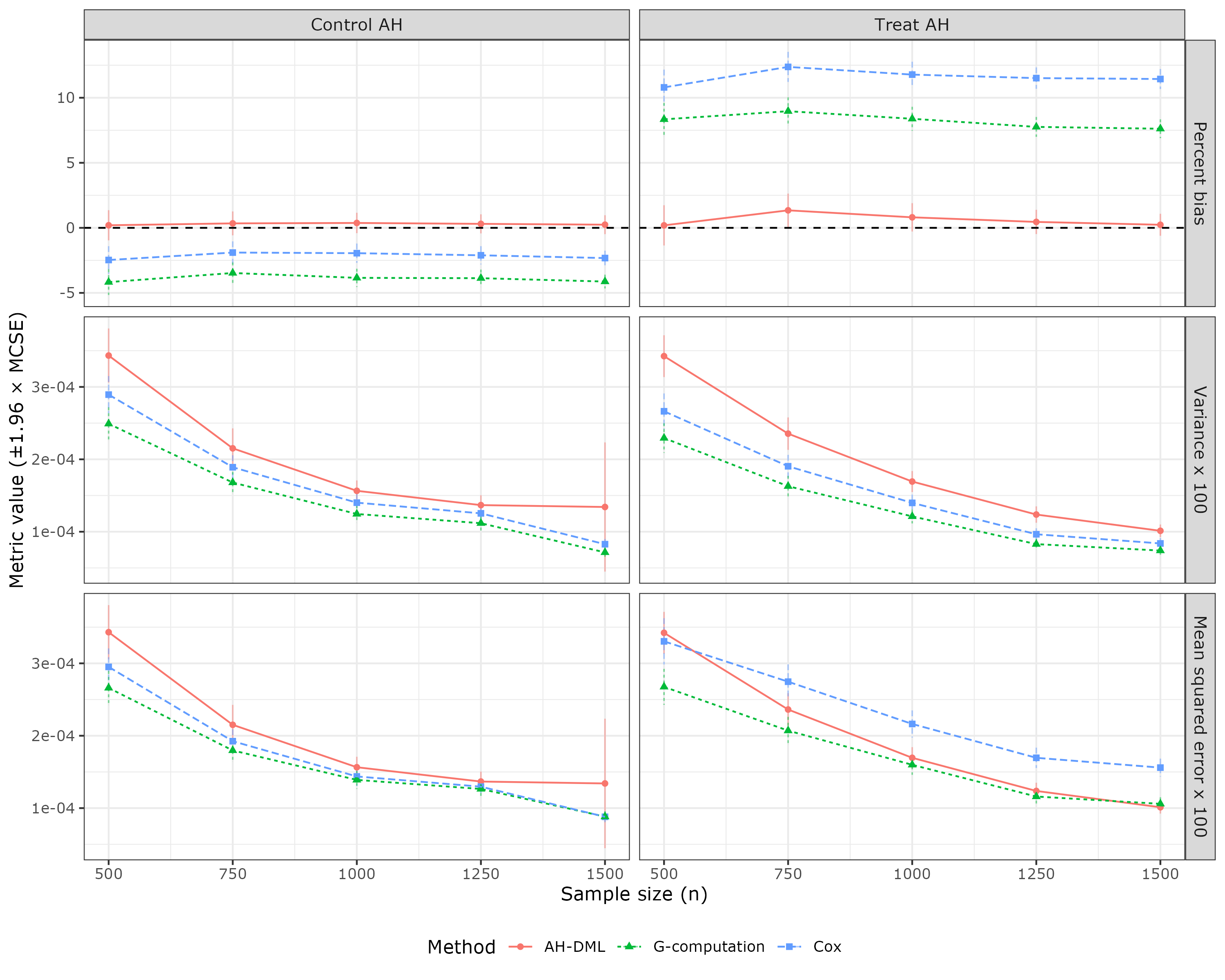}
    \caption{Control and treated AH.}
    \label{fig:app-nonph-ah}
  \end{subfigure}%
  \hfill
  \begin{subfigure}{0.48\textwidth}
    \centering
    \includegraphics[width=\linewidth]{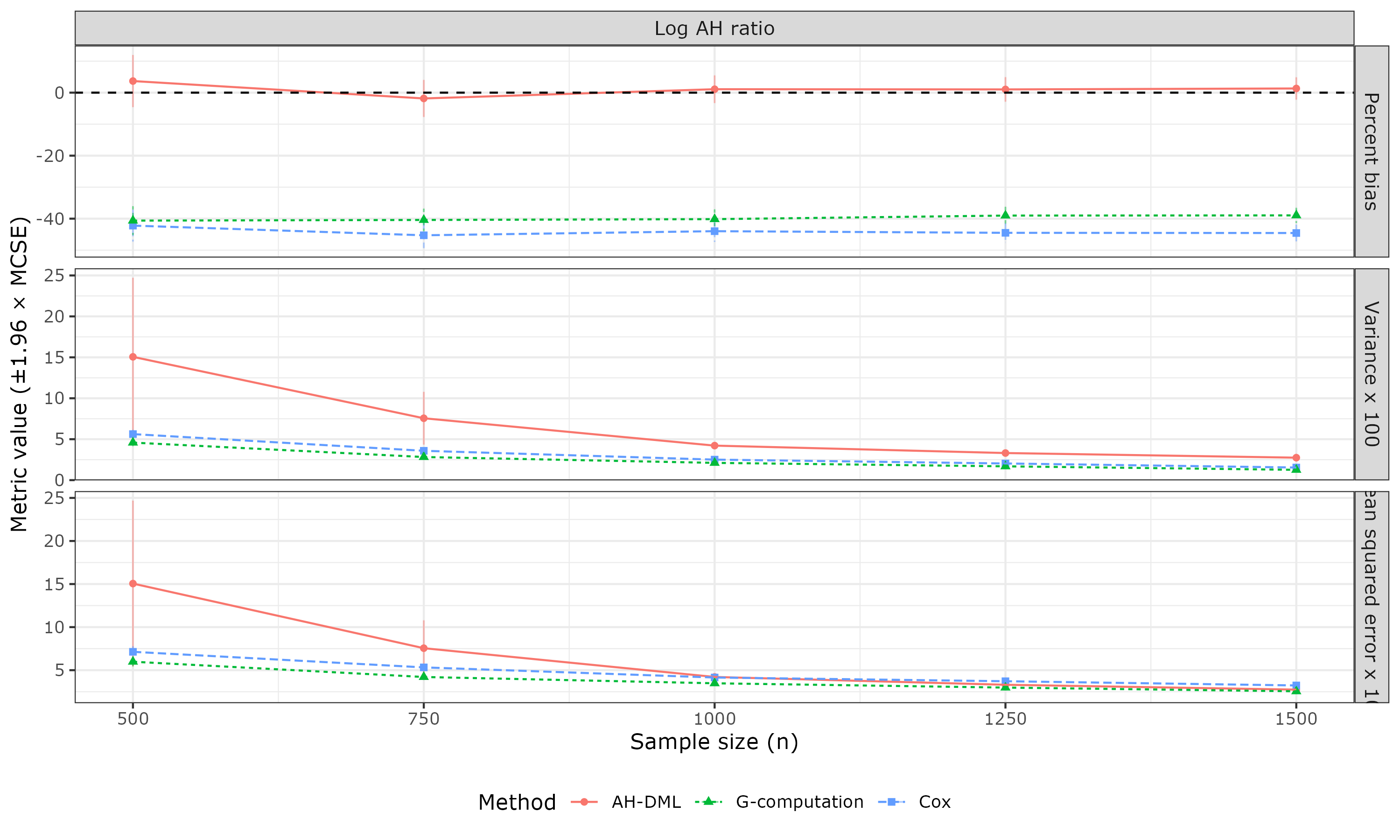}
    \caption{Log AH ratio.}
    \label{fig:app-nonph-ratio}
  \end{subfigure}
  \caption{Non-PH DGP: Monte Carlo percent bias, variance (rescaled by 100) and mean squared error (rescaled by 100).}
  \label{fig:app-nonph-bvm}
\end{figure}

\begin{figure}[t]
  \centering
  \includegraphics[width=0.8\textwidth]{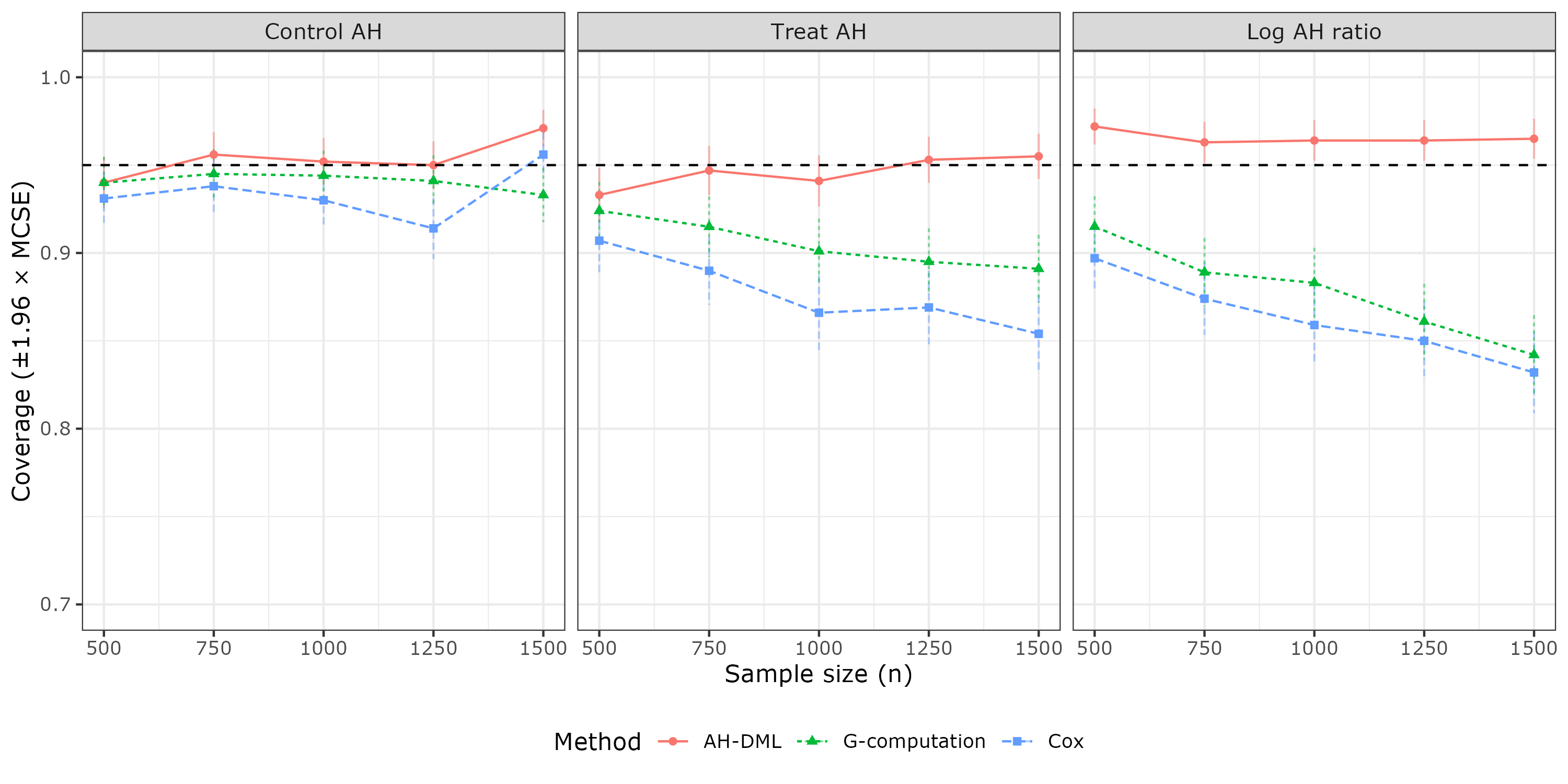}
  \caption{Non-PH DGP: empirical coverage of nominal 95\% confidence intervals. The dashed horizontal line indicates the nominal 0.95 level.}
  \label{fig:app-nonph-coverage}
\end{figure}

\begin{figure}[t]
  \centering
  \begin{subfigure}{0.48\textwidth}
    \centering
    \includegraphics[width=\linewidth]{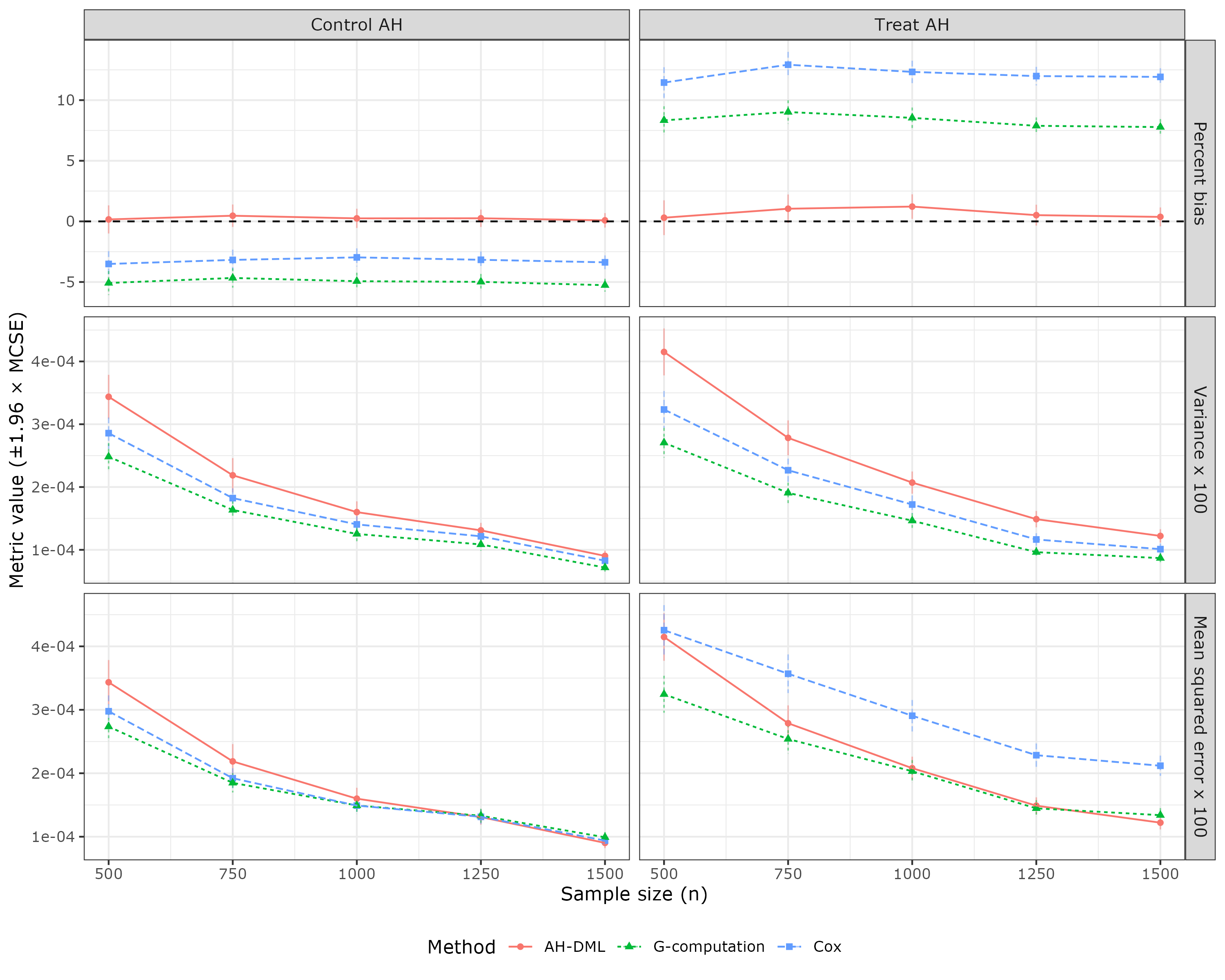}
    \caption{Control and treated AH.}
    \label{fig:app-cross-ah}
  \end{subfigure}%
  \hfill
  \begin{subfigure}{0.48\textwidth}
    \centering
    \includegraphics[width=\linewidth]{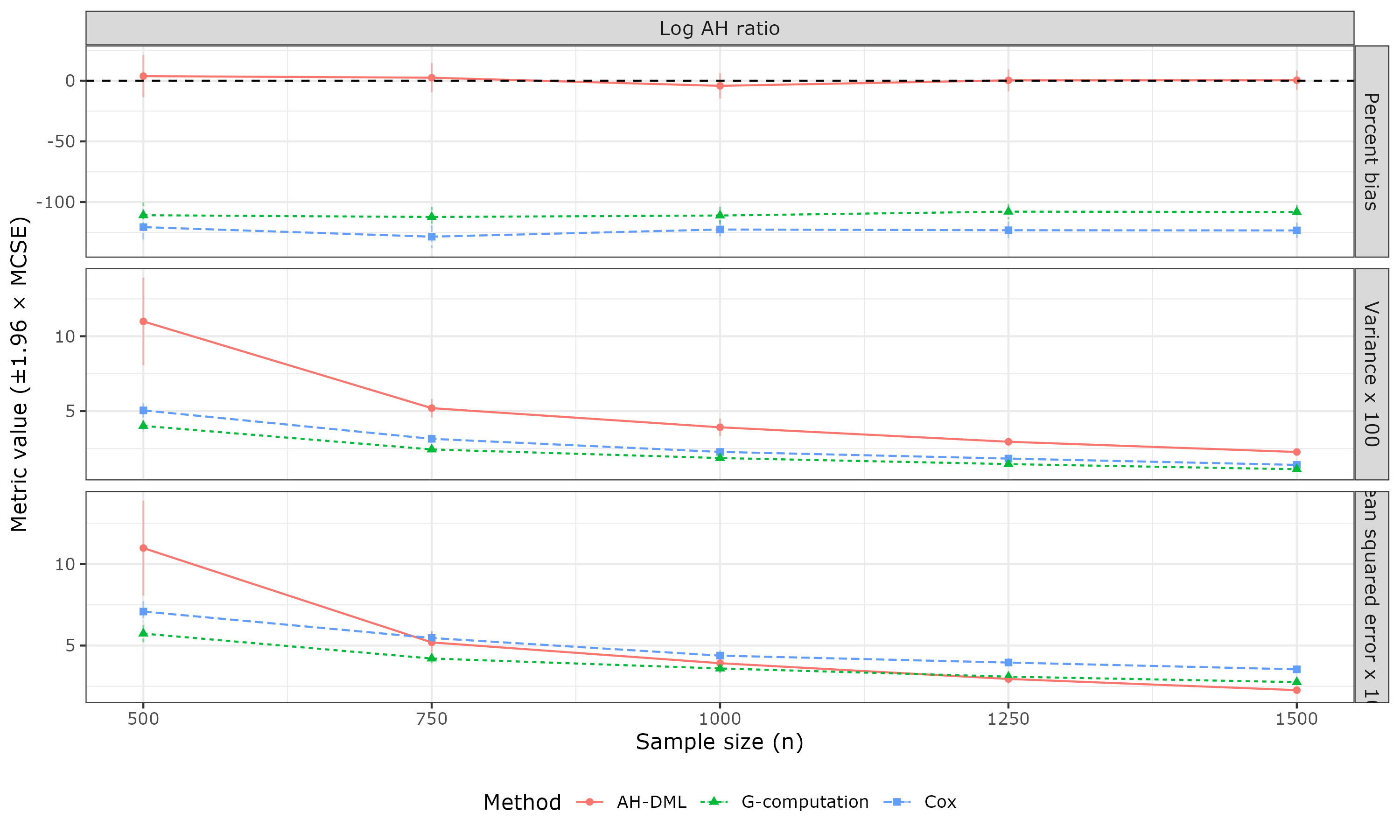}
    \caption{Log AH ratio.}
    \label{fig:app-cross-ratio}
  \end{subfigure}
  \caption{Cross-A DGP: Monte Carlo percent bias, variance (rescaled by 100) and mean squared error (rescaled by 100).}
  \label{fig:app-cross-bvm}
\end{figure}

\begin{figure}[t]
  \centering
  \includegraphics[width=0.8\textwidth]{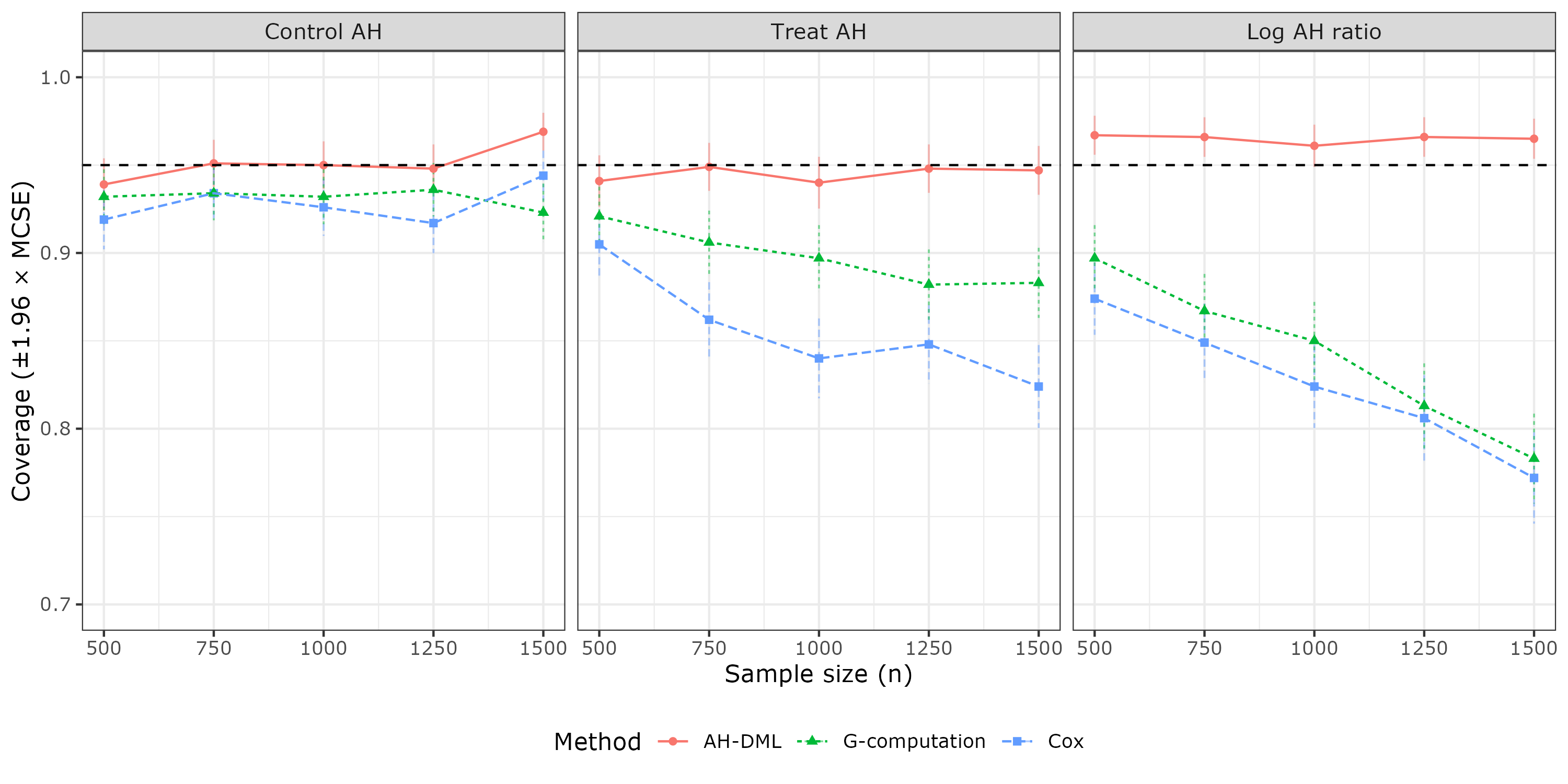}
  \caption{Cross-A DGP: empirical coverage of nominal 95\% confidence intervals. The dashed horizontal line indicates the nominal 0.95 level.}
  \label{fig:app-cross-coverage}
\end{figure}

\subsection{Proportional-hazards scenarios}
\label{app:sim-ph}

We report results for two proportional-hazards settings that share the same Weibull PH event-time mechanism but differ in censoring. The PH setting uses linear exponential censoring (Appendix~\ref{app:sim-ph-linear-dgp}), which is consistently estimable by our censoring library. The PH-complex setting uses a Weibull PH censoring model with rich interaction structure (Appendix~\ref{app:sim-ph-complex-dgp}), which is not consistently estimable by our current library. This comparison illustrates that the practical benefits from double robustness depend on whether at least one nuisance component is well captured by the analyst's learner library.

\begin{figure}[t]
  \centering
  \begin{subfigure}{0.48\textwidth}
    \centering
    \includegraphics[width=\linewidth]{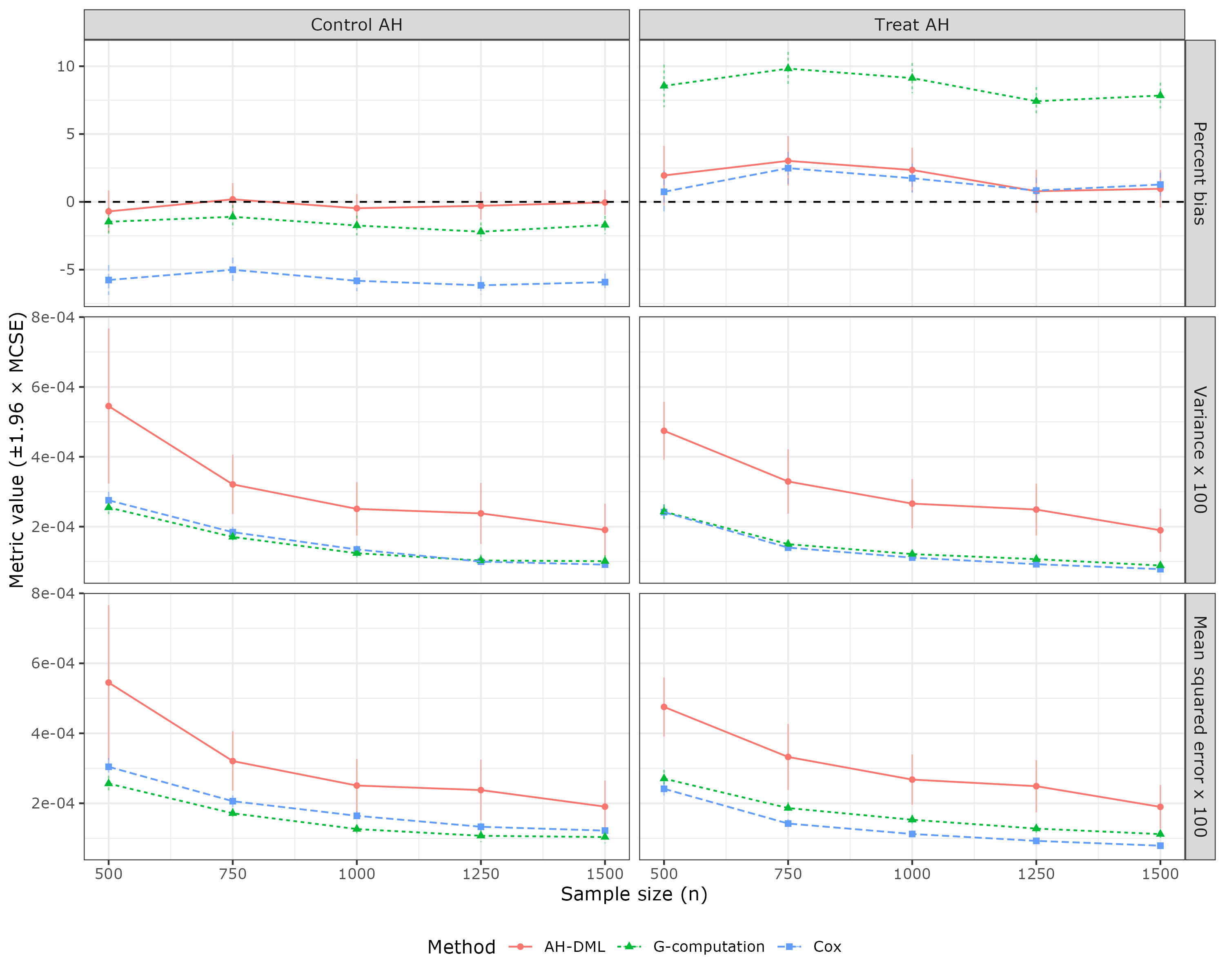}
    \caption{Control and treated AH.}
    \label{fig:app-ph-ah}
  \end{subfigure}%
  \hfill
  \begin{subfigure}{0.48\textwidth}
    \centering
    \includegraphics[width=\linewidth]{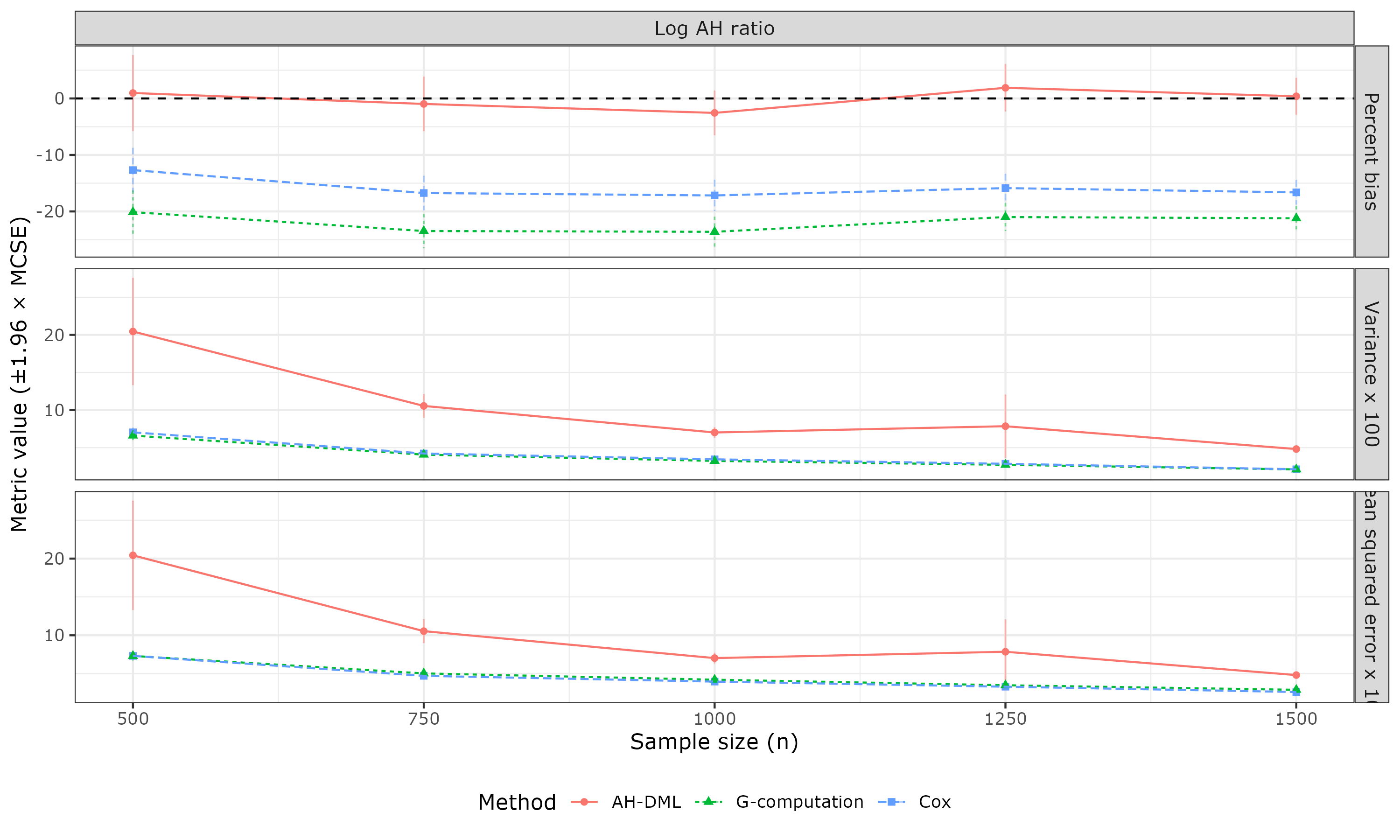}
    \caption{Log AH ratio.}
    \label{fig:app-ph-ratio}
  \end{subfigure}
  \caption{PH DGP (linear censoring): Monte Carlo percent bias, variance (rescaled by 100) and mean squared error (rescaled by 100).}
  \label{fig:app-ph-bvm}
\end{figure}

\begin{figure}[t]
  \centering
  \includegraphics[width=0.8\textwidth]{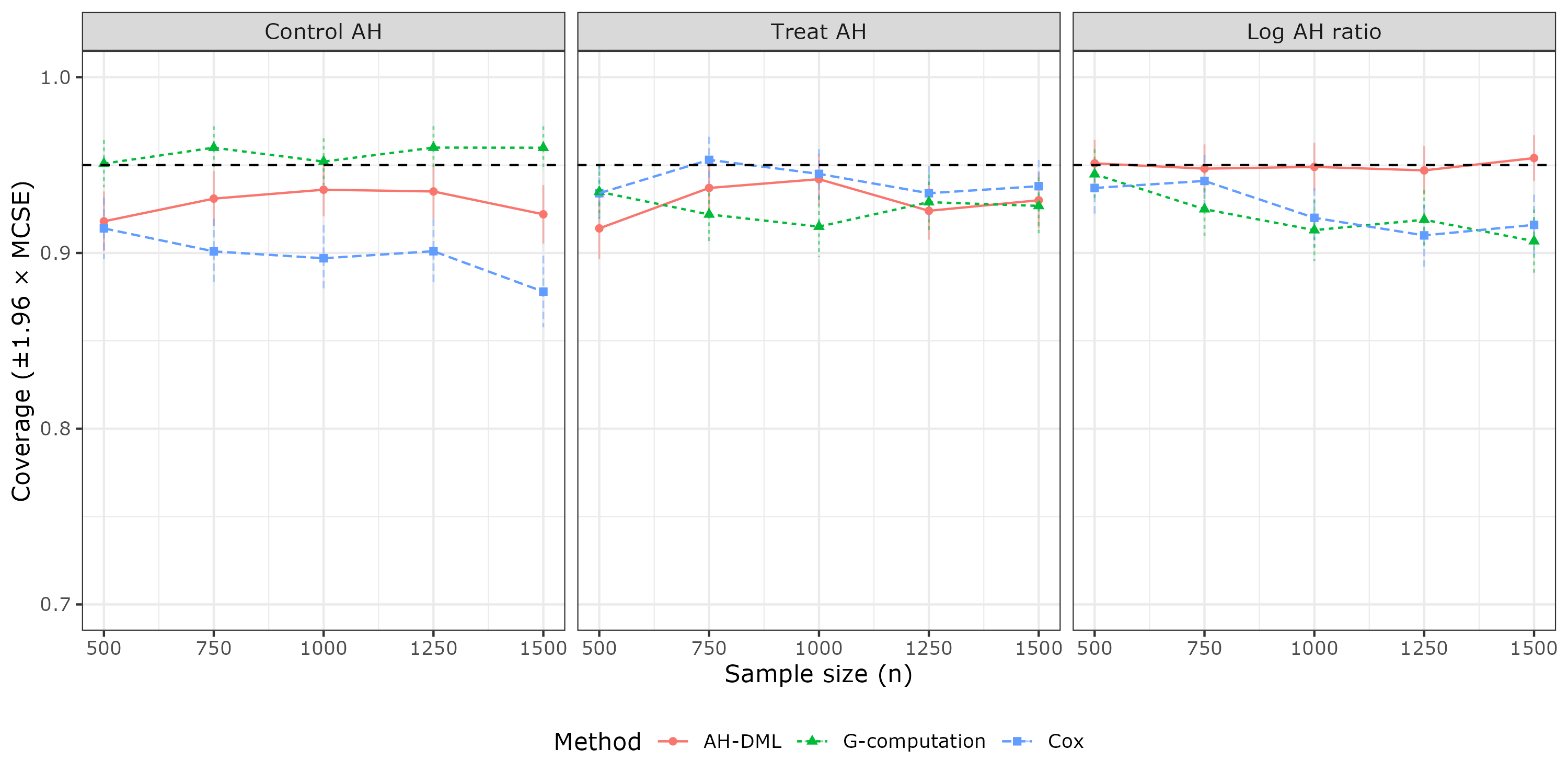}
  \caption{PH DGP (linear censoring): empirical coverage of nominal 95\% confidence intervals. The dashed horizontal line indicates the nominal 0.95 level.}
  \label{fig:app-ph-coverage}
\end{figure}

\begin{figure}[t]
  \centering
  \begin{subfigure}{0.48\textwidth}
    \centering
    \includegraphics[width=\linewidth]{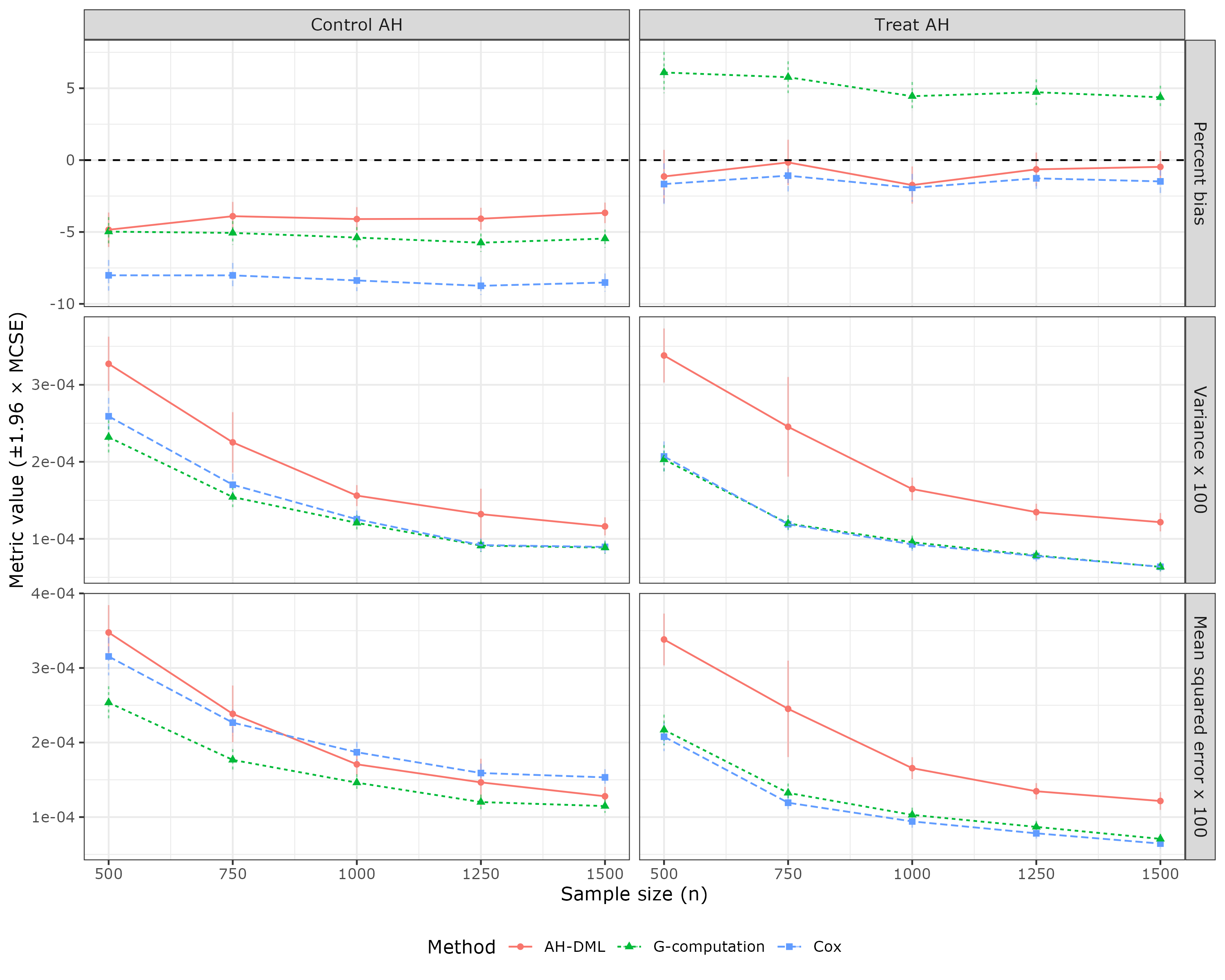}
    \caption{Control and treated AH.}
    \label{fig:app-phc-ah}
  \end{subfigure}%
  \hfill
  \begin{subfigure}{0.48\textwidth}
    \centering
    \includegraphics[width=\linewidth]{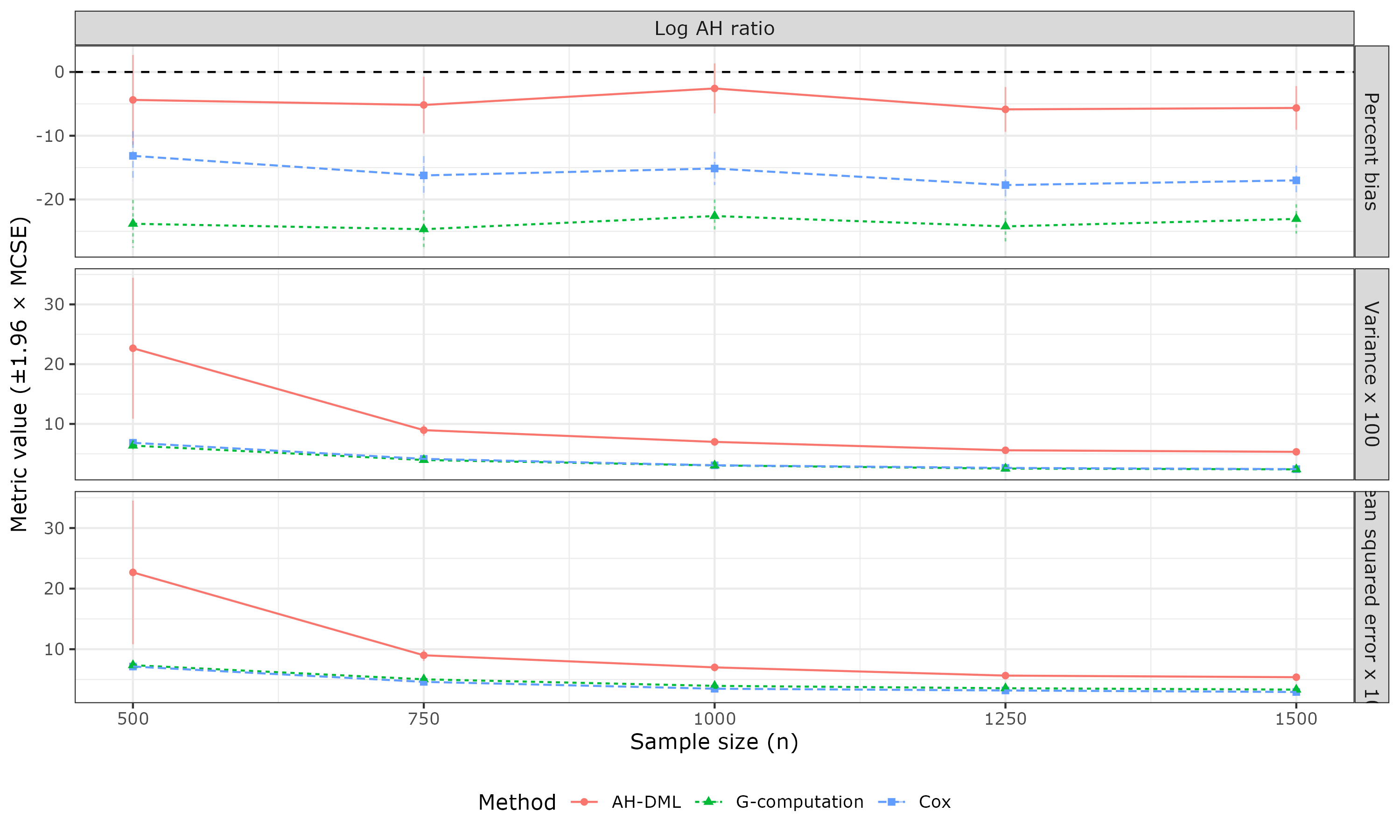}
    \caption{Log AH ratio.}
    \label{fig:app-phc-ratio}
  \end{subfigure}
  \caption{PH-complex censoring DGP: Monte Carlo percent bias, variance (rescaled by 100) and mean squared error (rescaled by 100).}
  \label{fig:app-phc-bvm}
\end{figure}

\begin{figure}[t]
  \centering
  \includegraphics[width=0.8\textwidth]{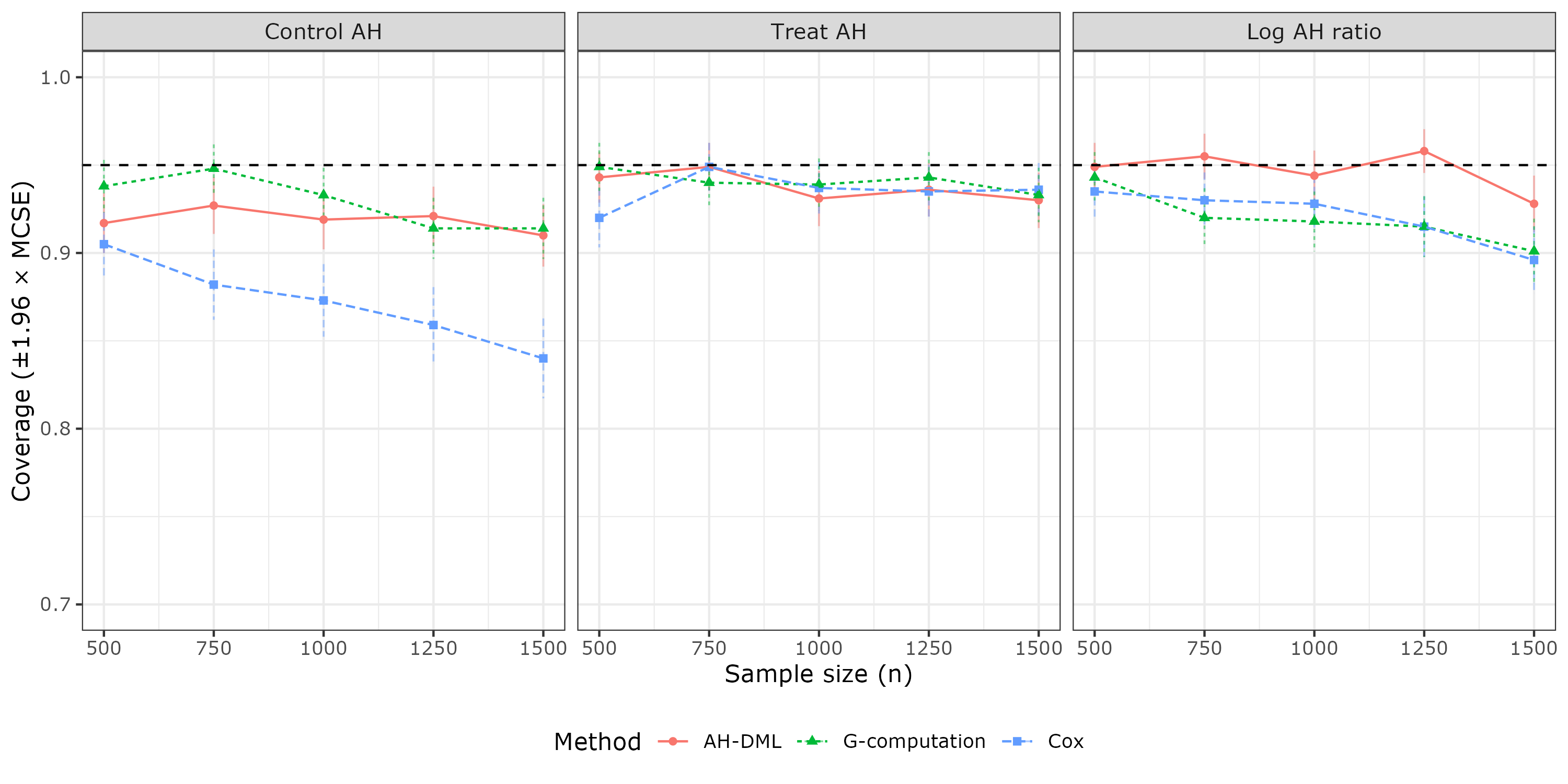}
  \caption{PH-complex censoring DGP: empirical coverage of nominal 95\% confidence intervals. The dashed horizontal line indicates the nominal 0.95 level.}
  \label{fig:app-phc-coverage}
\end{figure}

\end{document}